\documentclass{article}
\usepackage[utf8]{inputenc}

\usepackage[margin=0.5in]{geometry}
\usepackage{graphicx}
\usepackage[usenames,dvipsnames]{xcolor}

\usepackage{amsmath,amsthm,amsfonts}
\usepackage[utf8]{inputenc}
\usepackage{dsfont}

\usepackage{subcaption}
\usepackage[hidelinks]{hyperref}
\usepackage[nameinlink]{cleveref}
\usepackage{titling}

\newcommand{\payoff}[2]{\pi_{#1,#2}}

\crefname{equation}{Eq.}{Eqs.}

\newtheorem{theorem}{Theorem}
\newtheorem{lemma}[theorem]{Lemma}
\newtheorem{claim}[theorem]{Claim}
\crefname{claim}{Claim}{Claims}
\newtheorem{observation}[theorem]{Observation}
\crefname{observation}{Observation}{Observations}

\newcommand{\pa}[1]{\left( #1 \right)}
\newcommand{\bbE}{\mathbb{E}}
\newcommand{\bbR}{\mathbb{R}}
\newcommand{\calB}{\mathcal{B}}

\newcommand{\producer}{\mathcal{P}}
\newcommand{\scrounger}{\mathcal{S}}

\begin{document}
\title{Enhanced Food Availability can Deteriorate Fitness \\ through Excessive Scrounging}
\author{Robin Vacus\thanks{CNRS, located at the Research Institute on the Foundations of Computer Science (IRIF), Paris, France  (e-mail: rvacus@irif.fr).} ~and Amos Korman\thanks{CNRS, located at the French-Israeli Laboratory on Foundations of Computer Science, UMI FILOFOCS, CNRS, UP7, TAU, HUJI, WIS International Joint Research Unit, Tel-Aviv, Israel (e-mail: amos.korman@irif.fr). }
}
\date{}
\maketitle
\begin{abstract}
In group foraging situations, the conventional expectation is that increased food availability would enhance consumption, especially when animals prioritize maximizing their food intake. This paper challenges this conventional wisdom by conducting an in-depth game-theoretic analysis of a basic producer-scrounger model, in which animals must choose between intensive food searching as producers or moderate searching while relying on group members as scroungers. Surprisingly, our study reveals that, under certain circumstances, increasing food availability can amplify the inclination to scrounge to such an extent that it paradoxically leads to a reduction in animals' food consumption compared to scenarios with limited food availability. We further illustrate a similar phenomenon in a model capturing free-riding dynamics among workers in a company. We demonstrate that, under certain reward mechanisms, enhancing workers' production capacities can inadvertently trigger a surge in free-riding behavior, leading to both diminished group productivity and reduced individual payoffs.  Our findings underscore the significance of contextual factors when comprehending and predicting the impact of resource availability on individual and collective outcomes. 
\end{abstract}

\subsection*{} \label{sec:intro}

\noindent Braess's paradox, a thought-provoking result in game theory, demonstrates that in certain transportation networks, adding a road to the network can paradoxically increase traffic latency at equilibrium \cite{braess1968paradoxon,roughgarden2005selfish,case2019braess,cohen1991paradoxical}. In a similar vein, our study demonstrates how  improvement in underlying conditions, which may initially seem beneficial, can actually lead to degraded performance. However, instead of focusing on network flows as in Braess's paradox, our findings relate to contexts of productive groups, highlighting the impact of free-riding behavior.

Productive groups, such as workers in a company, collaborating researchers, or ensembles of foraging animals, consist of individuals who not only benefit from their own resource generation or findings, but also enjoy the added advantage
of reaping the rewards of others' contributions
\cite{kim1984free,albanese1985rational,holmstrom1982moral,giraldeau2000social,giraldeau2008social}.
For example, workers in a company may receive performance-based bonuses as a reward for their productivity, while also benefiting from the collective production of their peers, through stock shares or other profit-sharing mechanisms.
Similarly, in the realm of joint research projects, the success of the endeavor contributes to the collective prestige of the researchers, yet those who make substantial contributions often receive heightened recognition and prestige.  Likewise, in group foraging scenarios, animals that first discover food patches often have an opportunity to feed before other group members join in, granting them the ability to directly consume a portion of the food they found and secure a larger share
\cite{mottley2000experimental,aplin2017stable,morand2007wild,di2001social}.

Within such productive group contexts, the pervasive occurrence of free-riding becomes apparent \cite{baumol2004welfare,hardin1968tragedy,kim1984free,albanese1985rational,giraldeau2000social,brockmann1979kleptoparasitism}. Free-riding refers to individuals exploiting collective efforts or shared resources without contributing their fair share. In team projects, for instance, free-riders neglect their assigned tasks to avoid costs or risks while still benefiting from the project's overall success \cite{holmstrom1982moral,bolton2004contract}. 
This phenomenon is also remarkably prevalent in animal foraging contexts, where individuals opportunistically engage in scrounging or kleptoparasitism, feeding off prey discovered or captured by others  \cite{cj1984evolution,packer1988evolution,brockmann1979kleptoparasitism,pitcher1993functions,Yossi,giraldeau1994test,di2001social}.

 The framework of {\em Producer-Scrounger (PS)} games is a widely used mathematical framework for studying free-riding  in foraging contexts \cite{giraldeau2000social,giraldeau2008social,vickery_producers_1991, giraldeau_food_1999,Yossi}. In a PS game, players are faced with a choice between two strategies: producer and scrounger. The interpretation of these strategies varies according to the context, but generally, a producer invests efforts in order to produce or find more resources, whereas a scrounger invests less in  producing or finding resources, and instead relies more on exploiting resources produced or found by others.  
Based on the rules of the particular PS game, specifying the production  and  rewarding mechanisms, each animal chooses a strategy and the system is assumed to converge into an equilibrium state, where each animal cannot improve its own calorie intake by changing its strategy \cite{smith_evolution_1982}.

This paper examines the impact of individual production capacity on the resulting payoffs in equilibrium configurations. The first  PS game we consider  aims to model a scenario consisting of a group of foraging animals, with each animal striving to maximize its own food intake. 
Intuitively, as long as the group size remains unchanged, one may expect that, even if it may trigger more opportunistic behavior \cite{callaway2002positive,brooker1998balance}, increasing food abundance should ultimately improve consumption rather than diminish it. 
Likewise, within a productivity-based reward system in a company, one may expect that enhancing individual productivity levels would boost group productivity and subsequently increase workers' payoffs, despite a possible increase in free-riding behavior. However, our findings uncover a more nuanced reality, unveiling a remarkably pronounced detrimental effect of free-riding behavior and emphasizing that the existence of such a positive correlation between individual productivity and payoffs is strongly contingent on the specific characteristics of the setting. 

\subsection*{Results}

We  investigate two types of PS games: a {\em foraging game} involving animals searching for food and a {\em company game} involving a group of workers in a company. Our main objective is to analyze the effects of changes in individual production capabilities on players' payoffs, evaluated at equilibrium configurations. To facilitate comparisons across different parameter settings, we ensure that the games we examine have unique
equilibrium configurations (see SI, \Cref{lem:ESS_sufficient}). We say that a PS game exhibits a {\em Reverse-Correlation} phenomenon if an increase in individuals' production capacities leads to a decrease in the players' payoff, when evaluated at equilibrium configurations (see \nameref{sec:methods}).

We begin with the Foraging game, which is a generalization of the classical PS game  in  \cite{vickery_producers_1991, giraldeau_food_1999}. The main difference with the classical model is that our model considers two types of food, instead of a single type as previously assumed.

\paragraph{The Foraging game.}  
To illustrate our model, consider a scenario involving a group of animals engaged in fruit picking from trees (see \Cref{fig:fruit_story}). Each animal aims to maximize its fitness, which is determined by the amount of food it consumes. The trees in this scenario contain both low-hanging fruit, accessible to both producers and scroungers, and high-hanging fruit, which can only be reached by producers. When an animal picks fruit, it retains a portion for its own consumption (let's say 70\%), while the remaining fruit falls to the ground. Scroungers, instead of picking high-hanging fruit, focus on scanning the ground for fallen fruit. Fallen fruit is distributed equally among all scroungers and the animal that originally obtained it.

More precisely, consider $n\geq 2$ animals, where each of which needs to choose to be either a {\em producer} or a {\em scrounger}. 
We assume that a producer finds an amount of food corresponding to $F_\producer = 1+\gamma$ calories, where, adhering to the trees example above, 1 corresponds to the amount of high-hanging fruit and $\gamma$ is a parameter that governs the animal's access to  low-hanging fruit. In contrast, a scrounger directly finds only low-hanging fruit, corresponding to $F_\scrounger = \gamma$ calories. After finding food (of any type) consisting of $F$ calories, the animal (either producer or scrounger) consumes a fraction~$s \in [0,1]$ of what it found  (called the ``finder's share'') and shares the remaining $(1-s)F$ calories equally with all scroungers. See \Cref{fig:foraging_schematic} for a schematic illustration of structure of the foraging game.

The {\em payoff} of a player is defined as the capacity of its calorie intake.
Hence, for each $0\leq k\leq n$, the payoff of each pure strategy in the presence of exactly~$k$ producers in the population is:
\begin{equation} \label{eq:producer_scrounger_payoff}
    \pi_\producer^{(k)} = s (1+\gamma) + (1-s) \frac{1+\gamma}{1+n-k}, \mbox{~~~and~~~}
    \pi_\scrounger^{(k)} = \gamma + k(1-s) \frac{1+\gamma}{1+n-k},
\end{equation} 
where the second equation follows since scrounger-to-scrounger interactions compensate each-others, and hence, can be ignored in the expression of the payoff. Note that the classical model \cite{vickery_producers_1991, giraldeau_food_1999} is retrieved by setting $\gamma=0$, which essentially implies that there is only one type of food. 

We study what happens to the payoffs of players at equilibria configurations, denoted by $\pi_\star$,  as we let $\gamma$  increase (see \nameref{sec:methods}). This increase aims to capture the case that the low-hanging fruit becomes more abundant in the environment.

Note that for each fixed $k$, both $\pi^{(k)}_\producer$ and $\pi^{(k)}_\scrounger$ are  increasing in $\gamma$. Hence, simply increasing $\gamma$, without changing the strategy profile, necessarily results in improved payoffs. However, allowing players to modify their strategies after such a change may potentially lead to enhanced scrounging at equilibrium, which can have a negative impact on the payoffs. Nevertheless,  as mentioned earlier, one might expect that this negative effect would be outweighed by the overall improvement in fruit availability, resulting in an increase in consumption rather than a decrease.
This intuition becomes apparent when comparing the scenarios with $\gamma=0$ and $\gamma=1$. As $\gamma$ increases from 0 to 1, we can expect an increase in the proportion of scroungers due to the rising ratio of $F_\scrounger/F_\producer=\gamma/(1+\gamma)$. However, even if the system with $\gamma=1$ ends up consisting entirely of scroungers, the average food consumption of a player (which equals 1) would still be at least as large as that of any strategy profile in the $\gamma=0$ case. Nonetheless, as shown here, upon closer examination within the interval $\gamma\in[0,1]$, a different pattern is revealed.  

We combine simulations (\Cref{fig:foraging}) with mathematical game-theoretical analysis (\Cref{thm:foraging_braess} in SI) to disclose a Reverse-Correlation phenomenon in the Foraging game. Specifically, for the case of $n=3$ players, we prove in \Cref{thm:foraging_braess} that for any finder's share $s<1/2$, there exists an interval of values for $\gamma$ over which the Reverse-Correlation phenomenon occurs. 
In the case of $2$ players, the Reverse-Correlation phenomenon does not happen over an interval, and 
instead, we prove in \Cref{thm:foraging_braess} 
 that there exists a critical value of $\gamma$ at which $\pi_\star$ decreases locally.
 
In our simulations, which focus on $n=4$ players, a noticeable decline is observed in the payoffs at equilibrium  as $\gamma$ increases over a relatively large sub-interval of $[0,1]$ (\Cref{fig:foraging_2D,fig:foraging_3D}). 

\begin{figure} [htbp]
    \centering
    
    \begin{subfigure}{\textwidth}
        \centering
        \begin{subfigure}{.45\textwidth}
            \centering
            \includegraphics[width=\linewidth]{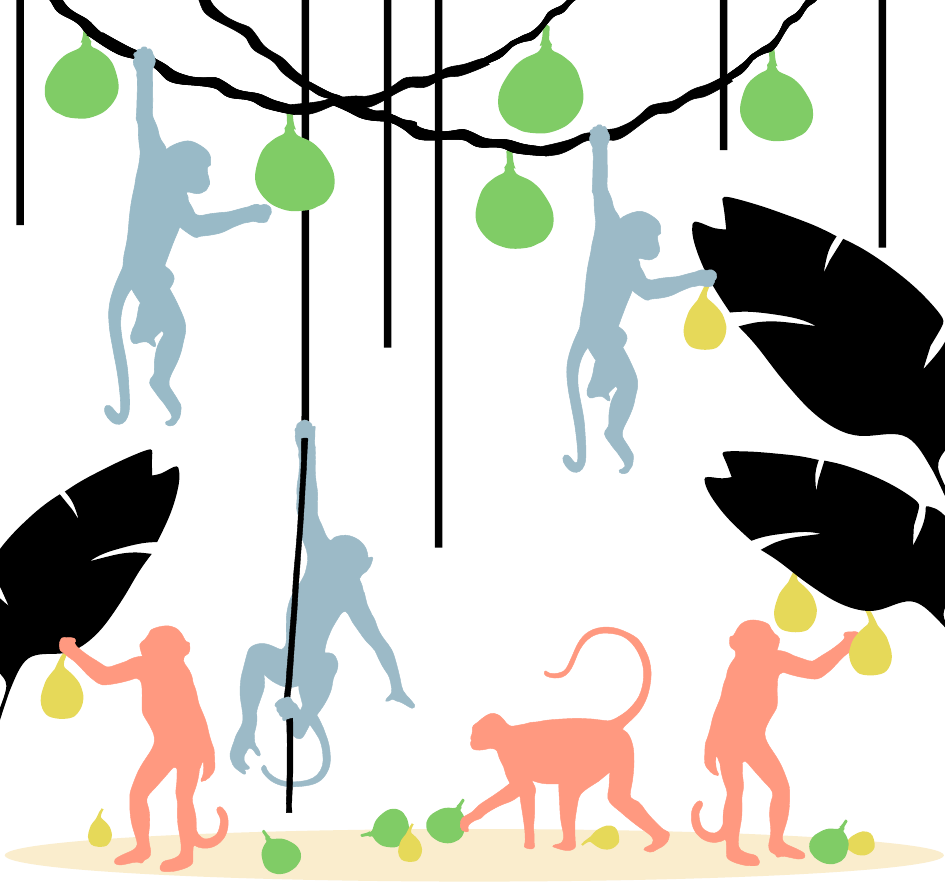}
            \caption{An illustration of the underlying scenario.}
            \label{fig:fruit_story}
        \end{subfigure}
        \hfill
        \begin{subfigure}{.45\textwidth}
            \centering
            \includegraphics[width=\linewidth]{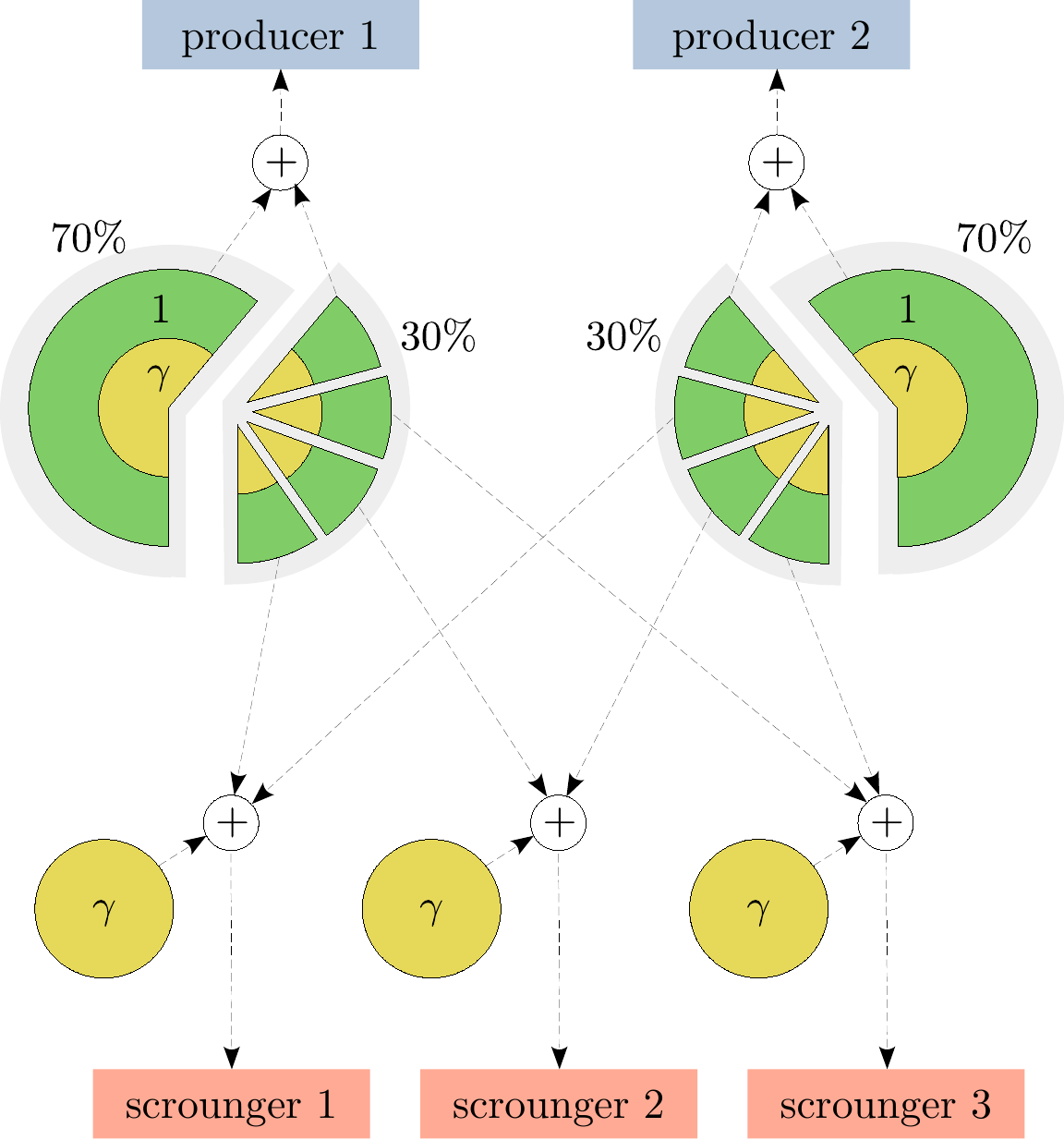}
            \caption{Schematic structure of the foraging game.}
            \label{fig:foraging_schematic}
        \end{subfigure}
        
    \end{subfigure}
    \vfill
    \begin{subfigure}{\textwidth}
        \centering
        \begin{subfigure}{.45\textwidth}
            \centering
            \includegraphics[width=\linewidth]{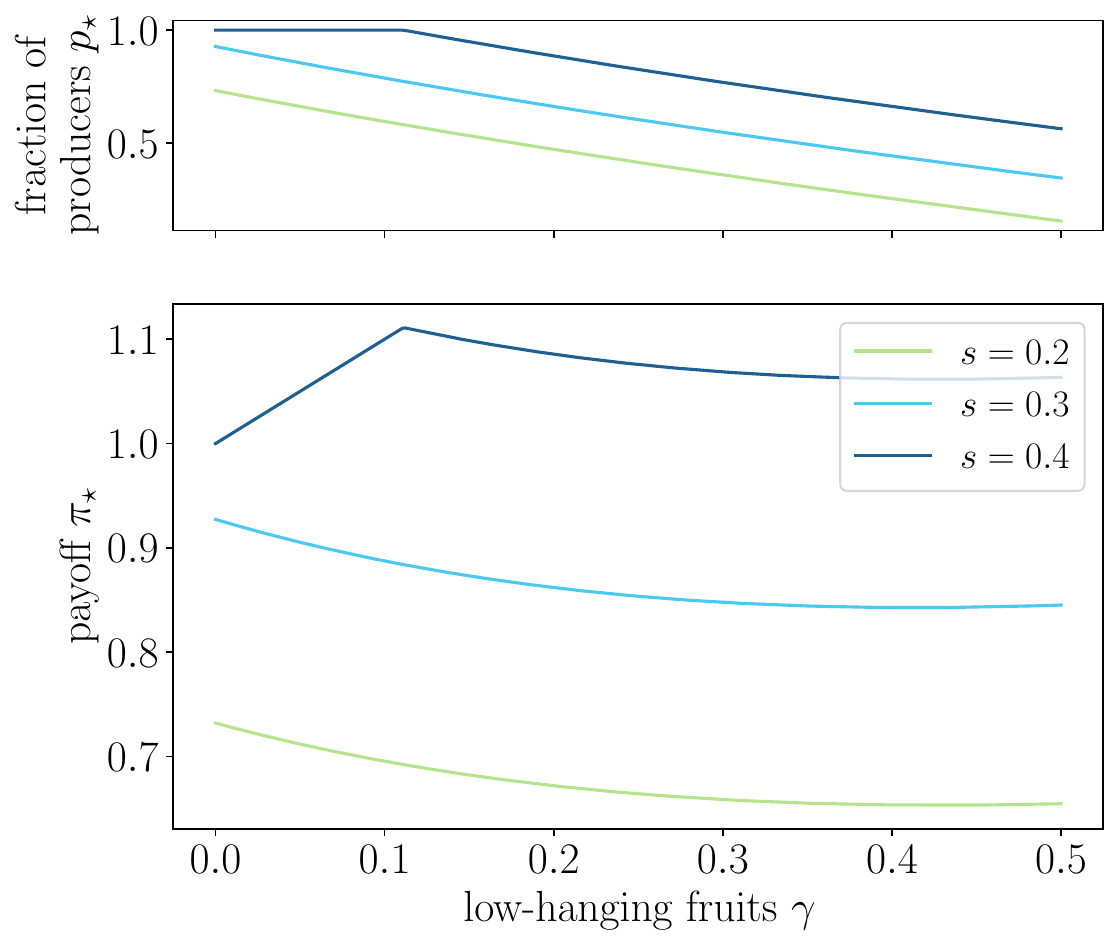}
            \caption{Payoff at equilibrium~$\pi_\star$ ($n=4$).}
            \label{fig:foraging_2D}
        \end{subfigure}
        \hfill
        \begin{subfigure}{.45\textwidth}
            \centering
            \includegraphics[width=\linewidth]{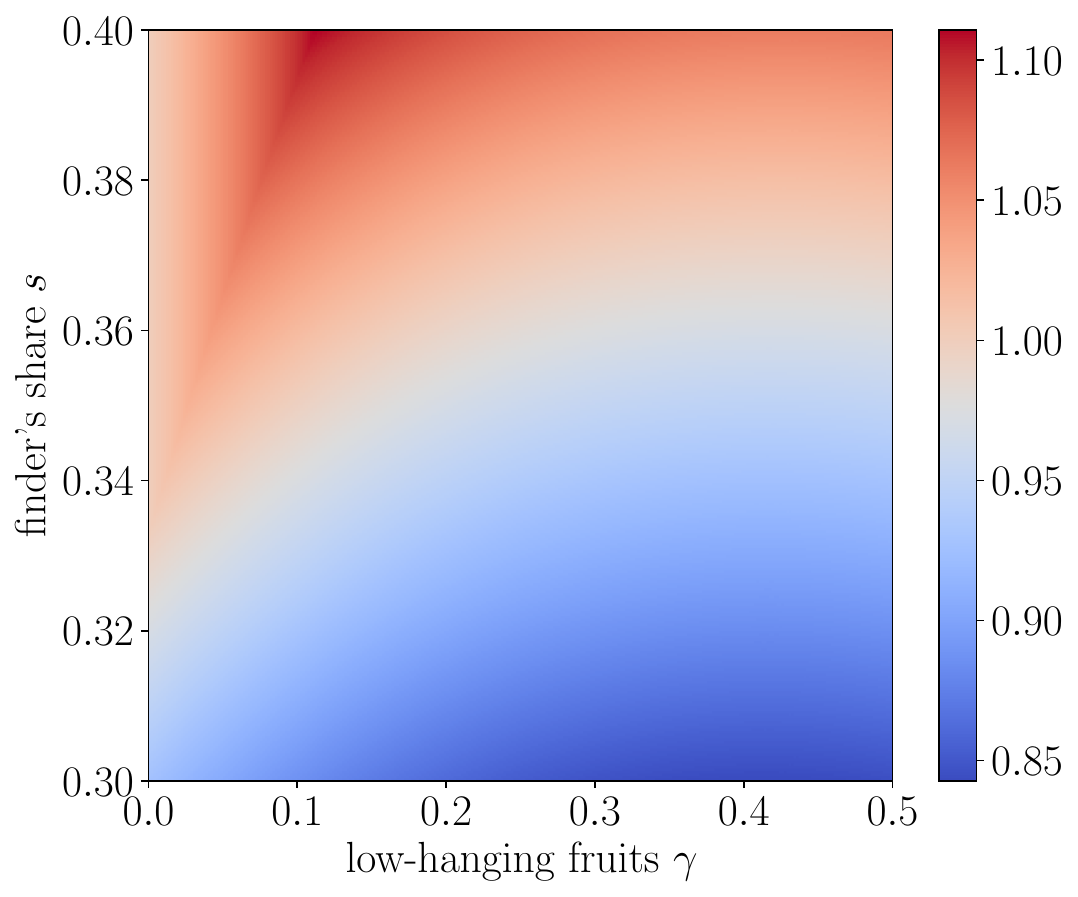}
            \caption{Payoff at equilibrium~$\pi_\star$ ($n=4$).}
            \label{fig:foraging_3D}
        \end{subfigure}
    \end{subfigure}
    
    \caption{
        {\bf The Foraging Game.}
        {\bf (a)} An illustration of the biological scenario that serves as an inspiration for the game. Producers (in blue) can grab high-hanging fruits (in green) as well as low-hanging fruits (in yellow). Scroungers (in red) are restricted to low-hanging fruits. Whenever some fruit is picked, a fraction is eaten directly by the animal, and the remaining falls to the ground, and is then consumed by the animal and the scroungers.
        {\bf (b)} The definition and key elements of the game are illustrated on a group of 5 animals --- 2 producers and 3 scroungers. Each scrounger finds food consisting of $\gamma$ calories (yellow, corresponding to low-hanging fruit) and each producer finds food of $1+\gamma$ calories, where 1 (green) corresponds to the high-hanging fruit and $\gamma$ to low-hanging fruit. A portion of $s=70\%$ is directly consumed by the finding animal, and the remaining $1-s=30\%$ portion (corresponding to fruit that fell on the ground) is equally shared between the animal and all scroungers. Scrounger-to-scrounger food exchanges cancel one another and their representations are hence omitted. 
        {\bf (c)} $n=4$. The probability of being a producer $p_\star(\gamma)$ and the payoff $\pi_\star(\gamma)$, at equilibria, for three values of the finder's share $s$. For each of these $s$ values, $\pi_\star$ is decreasing over a large interval of $\gamma$, effectively illustrating the Reverse-Correlation phenomenon.
        {\bf (d)} $n=4$. The relationship between the payoff at equilibrium $\pi_\star$ (color scale), the abundance of low-hanging fruits $\gamma$, and the finder's share~$s$. Once again, it illustrates the Reverse-Correlation phenomenon, while highlighting its independence from specific values of $s$.
    }
    \label{fig:foraging}
\end{figure}

\paragraph{The Company game.} We consider a PS game aiming to model a scenario with a group of $n\geq 2$ workers of equal capabilities who collaborate to produce a common product for a company. (Alternatively, by replacing the salary received by a worker with a notion of prestige, the game can also capture a scenario where a group of researchers collaborate in a research project.)

Each worker is assigned a specific part of the project and can choose between two pure behavioral strategies. 
A {\em producer} pays an {\em energetic cost} of $c>0$ units and with probability $p$  produces a product of quality $\gamma$ (otherwise, with probability $1-p$, it produces nothing). In contrast,  a {\em scrounger} pays no energetic cost and with probability $p$ produces a product of lower quality $\gamma'=a\gamma$ for some given $0\leq a<1$ (similarly, with probability $1-p$, nothing is produced).
Let $I=\{1,2,\ldots,n\}$, and let $q_i$ denote the quality of the product made by worker $i$, for $i\in I$, with $q_i=0$ if no product is made by this player. We define the {\em total production} as: 
\begin{equation}\label{eq:total}
    \Gamma=\sum_{i\in I} q_i. 
\end{equation}
We assume that the salary $\sigma_i$ of player $i$ is proportional to a weighted average between the quality of the products made by the workers, with more weight given to $q_i$.
If fact, by appropriately scaling the system,  we may assume without loss of generality that the salary is equal to this weighted average. Formally, we set:
\begin{equation}\label{eq:salary}
\sigma_i= s q_i + \frac{1-s}{n-1} \sum_{j\in I\setminus{i}} q_j, 
 \end{equation}
 for some $s \in [1/n,1]$.  
Note that $s=1$ implies that the salary each worker receives is identical to the quality of its own production, and $s=1/n$ represents equally sharing the quality of the global product between the workers. 

Next, we aim to translate the income salary of a player into his {\em payoff} using a {\em utility function}, denoted by $\phi(\cdot)$.
These quantities are expected to be positively correlated, however, the correlation may in fact be far from linear. Indeed, this is supported by the seminal work by Kahneman and Deaton \cite{kahneman2010high} which found that the emotional well-being of people increases relatively fast as their income rises from very low levels, but then levels off at a certain salary threshold. To capture such a correlation, we assume that $\phi$ is both monotonically non-decreasing, concave and bounded. 
In addition, the payoff of worker $i$ is decreased by its energetic investment. Finally, 
\begin{equation}\label{eq:researcher_simple}
   \pi_i:= \phi(\sigma_i)-c_i,
\end{equation}
where the energetic investment $c_i=c>0$ if $i$ is a producer and $c_i=0$ if $i$ is a scrounger. See \Cref{fig:company_payoff_mechanism} for an illustration of the semantic structure of the game.

The question of whether or not the system incurs a Reverse-Correlation phenomenon turns out to depend on the model's parameters, and, in particular, on the function $\phi(x)$. For example, when $\phi : x \mapsto x$ (i.e., the case that the salary is converted entirely into payoff), there is no Reverse-Correlation phenomenon (see Section \ref{observation} in SI). 
However, for some concave and bounded functions $\phi(x)$ the situation is different. 

We combine mathematical analysis with  simulations  
considering the function (see inset in \Cref{fig:company_payoff_mechanism}): \[\phi(x) = 1-\exp(-2 \, x).\] 
Our mathematical analysis proves the presence of a Reverse-Correlation phenomenon for the case of two workers (Theorem \ref{thm:company_braess} in the SI). Interestingly, this result holds for every~$s < 1$, demonstrating that the Reverse-Correlation phenomenon can occur even when the payoffs of individuals are substantially biased towards their own production compared to the production of others. 
Our simulations consider the case of $n=4$ workers and reveal (Figure \ref{fig:company}) that for certain parameters, letting $\gamma$ increase over a range of values results in a reduction in payoffs in equilibrium, thus indicating a Reverse-Correlation phenomenon. Moreover, as $\gamma$ increases over a range of values we also observe 
a substantial reduction in total production at equilibria. 

While the general shape of the utility function $\phi(x)= 1-\exp(-2 \, x)$ is justifiable, the  function itself was chosen somewhat arbitrarily. To strengthen the generality of our results, we  also provide in the SI (\Cref{sec:alternative_company_RC}) simulations supporting the Reverse-Correlation phenomenon  under another type of non-decreasing, concave, and bounded  utility function, specifically, \[\phi(x)= \min(1,x).\]  

\begin{figure} [htbp]
    \centering
    
    \begin{subfigure}{\textwidth}
        \centering
        
        \begin{subfigure}{.47\textwidth}
            \centering
            \includegraphics[width=\linewidth]{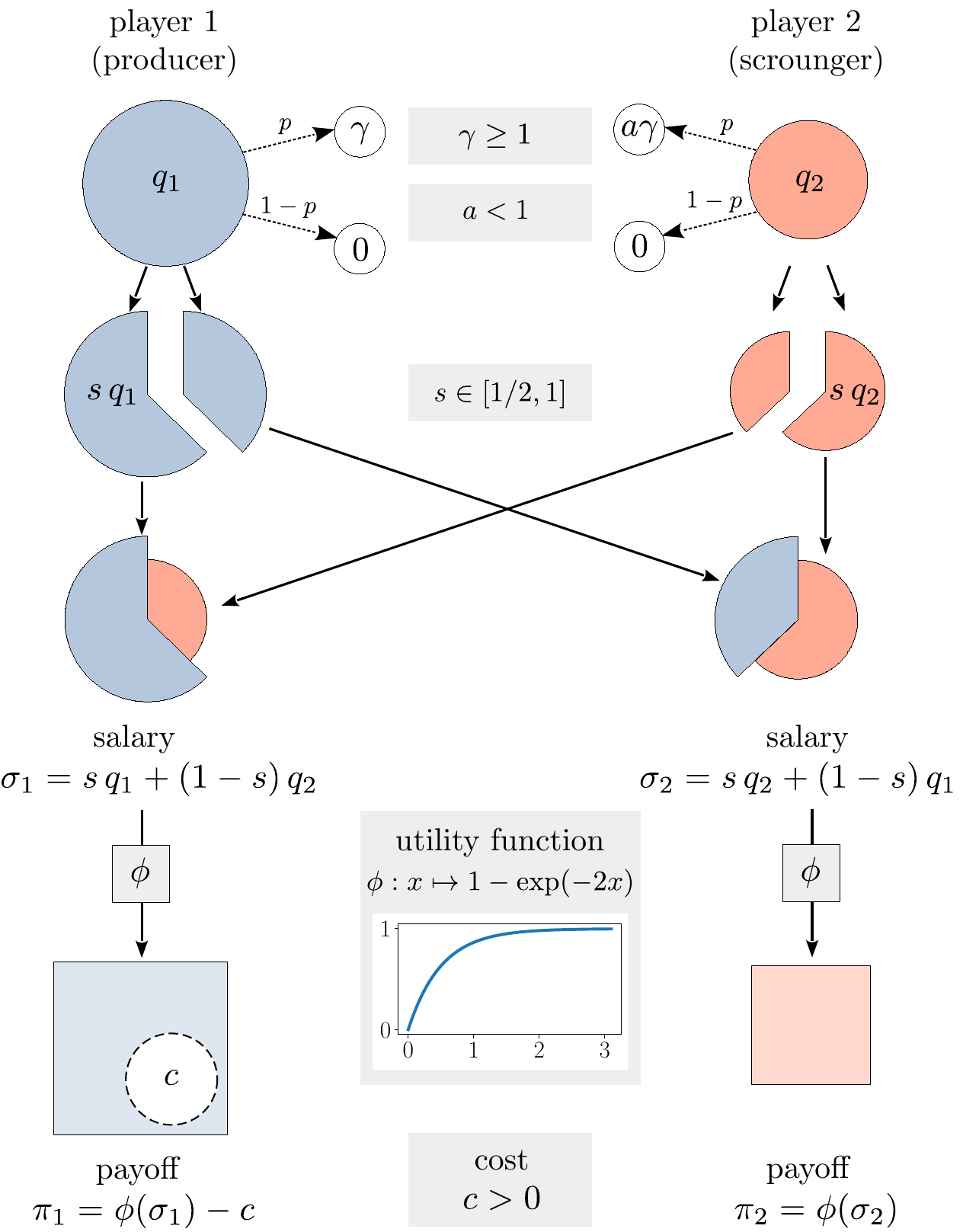}
            \caption{Schematic structure of the Company game.}
            \label{fig:company_payoff_mechanism}
        \end{subfigure}
        \hfill
        \begin{subfigure}{0.45\textwidth}
            \centering
            \includegraphics[width=\linewidth]{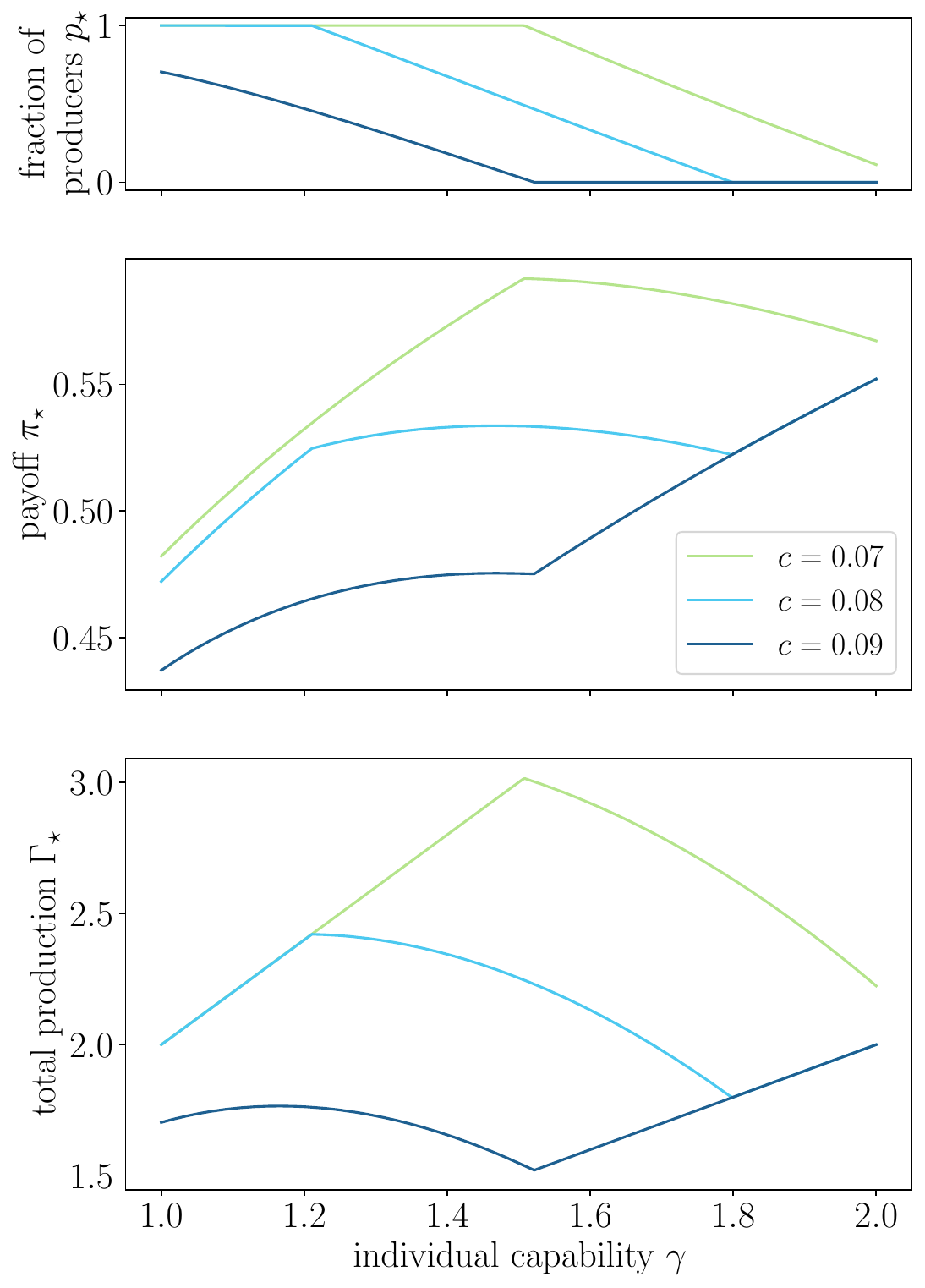}
            \caption{Payoffs and Total Production.}
            \label{fig:company_2D_payoffs}
        \end{subfigure}

    \end{subfigure}
    \vfill
    \begin{subfigure}{\textwidth}
        \centering
        \begin{subfigure}{0.45\textwidth}
            \centering
            \includegraphics[width=0.925\linewidth]{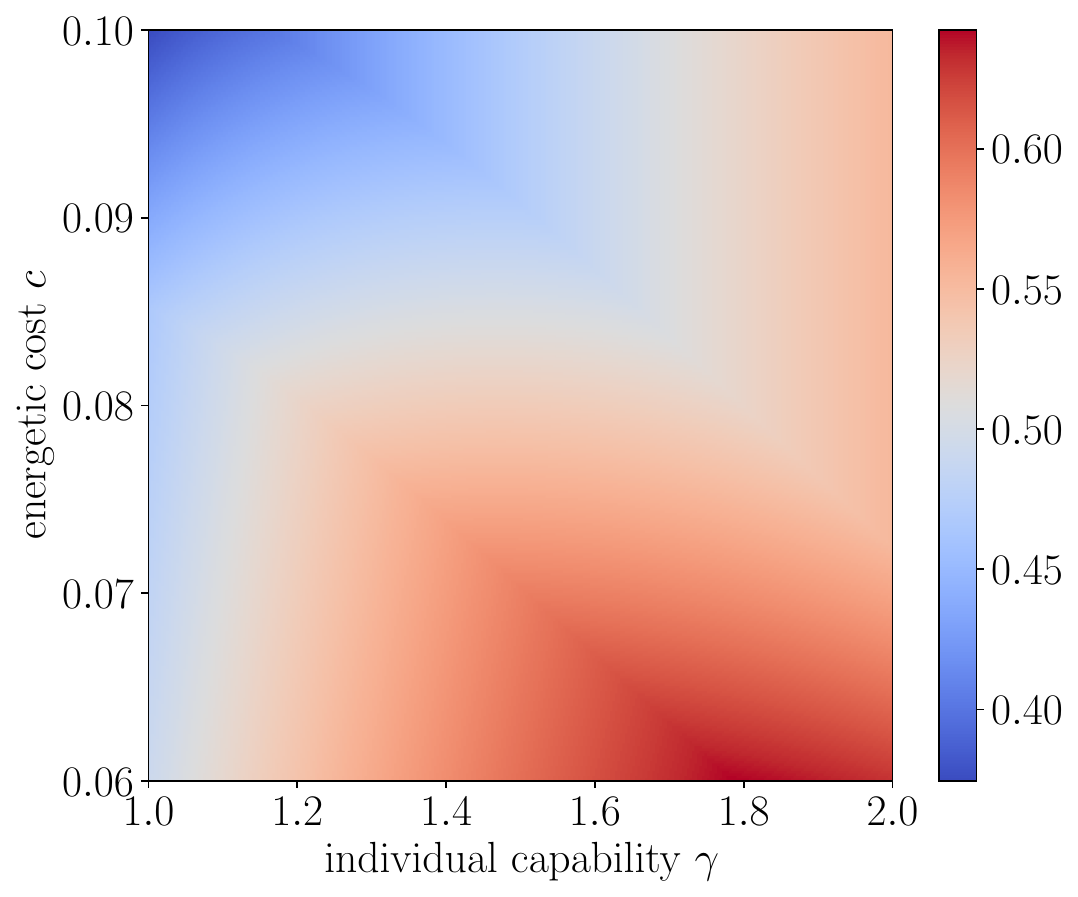}
            \caption{Payoffs.}
            \label{fig:company_3D_payoffs}
        \end{subfigure}
        \hfill
        \begin{subfigure}{0.45\textwidth}
            \centering
            \includegraphics[width=0.925\linewidth]{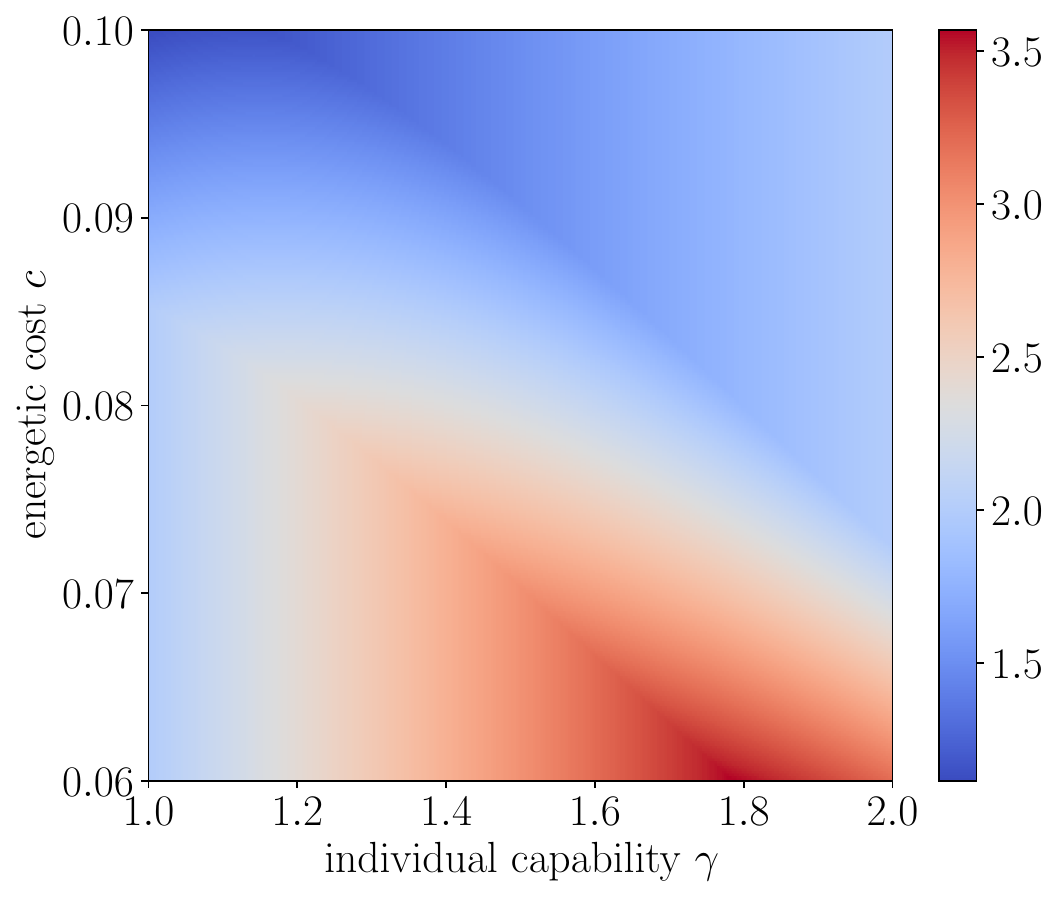}
            \caption{Total Production.}
            \label{fig:company_3D_totalProduction}
        \end{subfigure}
    \end{subfigure}
    
    \caption{{\bf The Company game}.
 {\bf (a)} Illustration of the game's definition for the case of 2 players:  a producer and a scrounger. 
The production $q_i$ of player $i$ is 0 with probability $1-p$, and otherwise, it is $\gamma$ if the player is a producer, and $a\gamma<\gamma$, otherwise. The salary $\sigma_i$ is a weighted average of the production of both players with more weight given to $q_i$. The utility function $\phi$ (inset) maps salary into the payoff, from which the energetic cost is withdrawn.
Finally, $\pi_i = \phi(\sigma_i) - c_i$, where $c_i$ is equal to $c$ if $i$ is a producer, and 0, otherwise.
{\bf (b)---(d)} Simulating a scenario with $n=4$ players, with the utility function $\phi: x \mapsto 1-\exp(-2x)$, and assuming $a=p=\frac{1}{2}$, and $s=0.6$.
{\bf (b)} The graph depicts the payoff $\pi_\star(\gamma)$ and the total production $\Gamma_\star(\gamma)$, as well as the probability of being a producer $p_\star(\gamma)$ at equilibria, as a function of individual capabilities $\gamma$ and for several values of $c$. It displays the Reverse-Correlation phenomenon over a certain interval that depends on $c$.
{\bf (c)} and {\bf (d)} The relationship between {\bf (c)} payoff $\pi_\star$,  and {\bf (d)} total production $\Gamma_\star$ at equilibrium, as a function (color scale) of individual capabilities $\gamma$ and the cost $c$ for production.} 
    \label{fig:company}
\end{figure}

\paragraph{A necessary condition.}
Finally, we identify  a necessary condition for the emergence of a Reverse-Correlation phenomenon in arbitrary PS models. Specifically, we prove (SI, \Cref{sec:necessary}) that a Reverse-Correlation phenomenon can  occur only if the definition of the producer's payoff is sensitive to the fraction of scroungers in the population.

An interesting consequence of this condition is that a seemingly minor change in the definition of the Foraging game can prevent the occurrence of the Reverse-Correlation phenomenon. Recall that in this game, when an animal finds food, it consumes a fraction $s$ of it (the finder's share), and the remaining $1-s$ fraction falls to the ground and is then equally shared between the animal and all scroungers. 
If the game is changed such that when a producer finds food, it only consumes the finder's share and does not eat at all from the food that falls on the ground (i.e., only the scroungers eat from it), then the game stops satisfying the aforementioned necessary condition. Indeed, in this case, the payoff of a producer would always be $1+\gamma$ irrespective of the number of scroungers.
Hence, the modified game does not exhibit a Reverse-Correlation phenomenon, regardless of the parameters involved.

\subsection*{Discussion}

In foraging contexts, it is commonly anticipated that an increase in food abundance would result in higher consumption, which, in turn, would lead to population growth over time.  In contrast, this paper introduces the intriguing possibility of a reversed scenario: that under certain producer/scrounger conditions, if animals have sufficient time to update their producer/scrounger strategy and reach a stable configuration before reproducing \cite{aplin2017stable,katsnelson2008early,morand2010learning}, then an increase in food abundance can paradoxically result in reduced consumption, which, in turn, can lead to a decline in population size! Note that this idea can also be viewed from the opposite perspective, namely, that by reducing food abundance, the inclination to scrounge can decrease, resulting in improved food consumption, ultimately leading to an increase in population size.  

The Reverse-Correlation phenomenon corresponds to a decrease in payoffs as underlying conditions improve. The counter-intuitive aspect of it stems from the fact that players aim to maximize their payoffs, yet when conditions improve, they are driven to perform worse.
 Another measure of interest is the total production, defined as the sum of production over all players (\Cref{eq:total}). Observe that in the Foraging game, since the animals eventually consume all food found by the group, the total production (i.e., the total food found) at equilibrium is proportional to the payoff $\pi_\star$, and hence their dynamics are similar.
This implies that whenever an increase in $\gamma$ results in a decrease in payoff at equilibrium  (indicating a Reverse-Correlation phenomenon), the same increase in $\gamma$ also leads to a decrease in total production at equilibrium. In contrast, in the Company game,  production is not fully represented in the payoffs, since some of it is ``lost'' when translating salaries into utilities. Additionally, the distinction between payoffs and production is further emphasized due to the energetic cost incurred by producers, which is reflected in their payoffs. Despite this distinction, as observed in \Cref{fig:company_2D_payoffs,fig:company_3D_totalProduction},
the measure of total production also exhibits a decrease across a range of $\gamma$ values. This phenomenon may carry particular importance for system designers, such as the company's principal, as it challenges a fundamental assumption underlying bottom-up approaches, namely, that as long as the system naturally progresses without external disruptions, improving individual performances should lead to enhanced group performances.

We demonstrated the Reverse-Correlation phenomenon on two basic game theoretical models. As evident by these games, the occurrence of this counter-intuitive phenomenon is highly contingent on the specific details of the game. 
For example, the Foraging game considers two types of food: low-hanging and high-hanging fruit (instead of just one type as consider in the classical game in \cite{vickery_producers_1991, giraldeau_food_1999}). Only producers have access to high-hanging fruit, while both producers and scroungers can access low-hanging fruit. Similarly to the classical model, when an animal finds food, it consumes a portion $s$ of it and the remaining $1-s$ portion is equally shared between this animal and all scroungers. The Reverse-Correlation phenomenon emerges as  the abundance of low-hanging fruit increases. However, as we showed, if one modifies the model so that the remaining $1-s$ portion is shared only between the scroungers, then the system no longer exhibits a Reverse-Correlation phenomenon. Hence, while at first glance this change may appear minor, it has a profound impact on the dynamics. 
In the Company game, a key aspect of the model concerns the choice of the utility function, which captures the relationship between salary and payoff. Inspired by the work of Kahneman and Deaton \cite{kahneman2010high}, we focused on non-decreasing, concave, and bounded utility functions. Within this family of functions, we identified two that exhibit a Reverse-Correlation phenomenon. However, we note that  not all utility functions in this family enable this phenomenon.

In conclusion, this paper uncovers a counter-intuitive phenomenon that can arise in productive group contexts involving rational players. It reveals that under certain conditions, increasing individual production efficiency can paradoxically lead to diminished payoffs and overall group production, due to a significant rise in free-riding behavior. These findings provide valuable insights into the complex dynamics at play, underscoring the intricate relationship between individual and group performances, as well as the detrimental impact of free-riding behavior. Moreover, our results highlight the nuanced consequences of contextual factors in understanding and predicting the impact of increased (or decreased) resource availability on both individual and collective outcomes.
 
\subsection*{Methods}\label{sec:methods}
We consider two types of PS models, for which we combine analytic investigations with computer simulations. In both models, we assume that both producers and scroungers are able to produce, but that producers are expected to produce more.
In our models, the payoffs and total production are positively correlated with the number of producers.
We consider a parameter $\gamma$ that is positively correlated to the expected production capacities of both producers and scroungers. To check what happens as individual capabilities improve, we increase $\gamma$ and observe how the payoff and total production measures change, for configurations at equilibria. 

We focus on the strong definition of equilibria, known as Evolutionary Stable Strategy (ESS), using the standard definition as introduced by Maynard Smith and Price \cite{smith_evolution_1982}. 
Specifically, given a PS game, let $\payoff{q}{p}$ denote the expected payoff of a player if it chooses to be a producer with probability $q$, in the case that all $n-1$ remaining players are producers with probability $p$.
We say that $p_\star \in [0,1]$ is an ESS if and only if for every~$q \in [0,1]$ such that $q \neq p_\star$,
\begin{itemize}
    \item[(i)] either $\payoff{p_\star}{p_\star} > \payoff{q}{p_\star}$,
    \item[(ii)] or $\payoff{p_\star}{p_\star} = \payoff{q}{p_\star}$ and $\payoff{p_\star}{q} > \payoff{q}{q}$.
\end{itemize}
To be able to compare instances with different parameters, we make sure that for every value of $\gamma$, the game we consider always has a unique ESS, termed $p_\star(\gamma)$. In such a case, we write $\pi_\star(\gamma) = \pi_{p_\star(\gamma),p_\star(\gamma)}$ the payoff at the ESS, and omit the parameter $\gamma$ when clear from the context. 

In our rigorous analysis, presented in the SI, we prove the existence and uniqueness of the ESS, for the corresponding scenarios. 
To determine the ESS in our simulations, we utilize simple procedures that take the values of $p$ and $q$ as inputs and calculate $\payoff{q}{p}$.
Then, we search for the specific value of $p$ that satisfies (i) $\payoff{1}{p} = \payoff{0}{p}$, (ii) for every~$q<p$, $\payoff{1}{q} > \payoff{0}{q}$ and (iii) for every~$q>p$, $\payoff{1}{q} < \payoff{0}{q}$, which together are sufficient conditions for $p$ to be the unique ESS (see SI, \Cref{lem:ESS_sufficient}). 

Both the code used in the simulations and the code employed to generate the figures were implemented in Python. For further details and access to the code, please refer to \cite{Vacus_Figure_Generation_Code_2023}. 

We say that the system incurs a {\em Reverse-Correlation} phenomenon if increasing $\gamma$ over a certain interval yields decreased payoff when evaluated at (the unique) ESS. In other words, this means that $\pi_\star(\gamma)$ is a decreasing function of $\gamma$ over this interval.

\paragraph{Acknowledgements.} The authors would like to thank Yossi Yovel, Ofer Feinerman, Yonatan Zegman and Yannick Viossat for helpful discussions.

\bibliographystyle{unsrt}
\bibliography{ref}

\clearpage

\centerline{\huge Supplementary Information}
\vspace{70pt}

\section{Uniqueness of ESS}\label{sec:ESS}

The following sufficient condition for the existence and uniqueness of an ESS is well-known. We state and prove it below for the sake of completeness.
\begin{lemma} \label{lem:ESS_sufficient}
    If $p_\star \in [0,1]$ is such that (i) $\payoff{1}{p_\star} = \payoff{0}{p_\star}$, (ii) for every~$q<p_\star$, $\payoff{1}{q} > \payoff{0}{q}$ and (iii) for every~$q>p_\star$, $\payoff{1}{q} < \payoff{0}{q}$, then $p_\star$ is a unique ESS.
\end{lemma}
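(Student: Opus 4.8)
The plan is to exploit the fact that a player's payoff is affine in that player's own mixing probability, which collapses the whole question to the sign behaviour of a single scalar function. A player who is a producer with probability $q$ against a population in which the other $n-1$ players are each producers with probability $p$ obtains, by conditioning on its own realised pure strategy, $\payoff{q}{p} = q\,\payoff{1}{p} + (1-q)\,\payoff{0}{p}$. Consequently, for all $q,q',p$ we have $\payoff{q}{p}-\payoff{q'}{p} = (q-q')\pa{\payoff{1}{p}-\payoff{0}{p}}$. Writing $h(p) := \payoff{1}{p}-\payoff{0}{p}$, the three hypotheses read $h(p_\star)=0$, and $h(q)>0$ for $q<p_\star$, and $h(q)<0$ for $q>p_\star$.

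Next I would verify that $p_\star$ is an ESS. Fix any $q\neq p_\star$. From the identity above, $\payoff{p_\star}{p_\star}-\payoff{q}{p_\star} = (p_\star-q)\,h(p_\star) = 0$, so clause (i) of the ESS definition cannot be invoked, and the entire burden falls on clause (ii). Its equality premise $\payoff{p_\star}{p_\star}=\payoff{q}{p_\star}$ is exactly what we just computed, and $\payoff{p_\star}{q}-\payoff{q}{q} = (p_\star-q)\,h(q)$, which is strictly positive both when $q<p_\star$ (both factors positive) and when $q>p_\star$ (both factors negative). Hence clause (ii) holds for every $q\neq p_\star$, so $p_\star$ is an ESS.

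For uniqueness, suppose toward a contradiction that $p'\in[0,1]$ with $p'\neq p_\star$ is also an ESS, and test its stability against the single deviation $q=p_\star$. We have $\payoff{p'}{p'}-\payoff{p_\star}{p'} = (p'-p_\star)\,h(p')$, and the sign hypotheses on $h$ force this to be strictly negative: if $p'>p_\star$ then $p'-p_\star>0$ while $h(p')<0$, and if $p'<p_\star$ then $p'-p_\star<0$ while $h(p')>0$. Thus $\payoff{p'}{p'}<\payoff{p_\star}{p'}$, which contradicts clause (i) and, since clause (ii) requires the equality $\payoff{p'}{p'}=\payoff{p_\star}{p'}$, contradicts clause (ii) as well. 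Hence no such $p'$ exists and $p_\star$ is the unique ESS.

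I do not expect a substantial obstacle here: once the affineness in the own-mixing variable is noted, everything is elementary. The one place that needs care is parsing the two-clause ESS definition correctly — specifically, recognising that clause (i) is automatically inactive at $p_\star$ because, by condition (i) of the hypothesis, every strategy is a best response against $p_\star$, so stability must come entirely from the second-order comparison in clause (ii); and conversely, that this same inactivity is precisely what dooms any competing candidate $p'$, since against $p_\star$ it already loses strictly in first order.
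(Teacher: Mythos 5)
Your proof is correct and follows essentially the same route as the paper's: both arguments rest on the affineness of $\payoff{q}{p}$ in $q$, reduce everything to the sign of $\payoff{1}{p}-\payoff{0}{p}$, establish the ESS property via clause (ii) of the definition, and then rule out any other candidate by exhibiting a strictly profitable deviation. The only cosmetic difference is in the uniqueness step, where you test a competing $p'$ against the deviation $q=p_\star$ while the paper uses the pure deviations $q=1$ or $q=0$; both work for the same reason.
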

\begin{proof}
    By assumption (i), we have for every~$q \in [0,1]$:
    \begin{equation*}
        \payoff{q}{p_\star} = q \, \payoff{1}{p_\star} + (1-q) \, \payoff{0}{p_\star} = p_\star \, \payoff{1}{p_\star} + (1-p_\star) \, \payoff{0}{p_\star} = \payoff{p_\star}{p_\star}.
    \end{equation*}
    Thus, to show that $p_\star$ is an ESS, we need to check the second condition in the definition.
    We start by considering the case that~$q < p_\star$. By assumption (ii), it implies that $\payoff{1}{q} > \payoff{0}{q}$, so
    \begin{equation*}
        \payoff{p_\star}{q} = p_\star \, \payoff{1}{q} + (1-p_\star) \, \payoff{0}{q} > q \, \payoff{1}{q} + (1-q) \, \payoff{0}{q} = \payoff{q}{q}.
    \end{equation*}
    Similarly, in the case that~$q > p_\star$, assumption (iii) implies that $\payoff{1}{q} < \payoff{0}{q}$, so
    \begin{equation*}
        \payoff{p_\star}{q} = p_\star \, \payoff{1}{q} + (1-p_\star) \, \payoff{0}{q} > q \, \payoff{1}{q} + (1-q) \, \payoff{0}{q} = \payoff{q}{q}.
    \end{equation*}
    By the second condition in the definition, this implies that $p_\star$ is an ESS.
    
    Finally, we prove the unicity property.
    Let $p \neq p_\star$.
    If $p<p_\star$, then $\payoff{1}{p} > \payoff{0}{p}$ by assumption $(ii)$, and $p<1$. Therefore,
    \begin{equation*}
        \payoff{1}{p} > p \, \payoff{1}{p} + (1-p) \, \payoff{0}{p} = \payoff{p}{p}.
    \end{equation*}
    If $p>p_\star$, then $\payoff{1}{p} < \payoff{0}{p}$ by assumption $(iii)$, and $p>0$. Therefore,
    \begin{equation*}
        \payoff{0}{p} > p \, \payoff{1}{p} + (1-p) \, \payoff{0}{p} = \payoff{p}{p}.
    \end{equation*}
    In both cases, $p$ is not an ESS, which concludes the proof of \Cref{lem:ESS_sufficient}.   
\end{proof}

\section{Analysis of the Foraging game}\label{sec:foraging}
The goal of this section is to prove   \Cref{thm:foraging_braess}. Note that the theorem considers $n=2,3$. For the case of $3$ players, the theorem states that as long as the finder's share satisfies $s<1/2$, there exists an interval of values for $\gamma$ over which the Reverse-Correlation phenomenon occurs. In contrast, in the case of $2$ players, the Reverse-Correlation phenomenon does not happen over an interval, and 
instead, there exists a critical value of $\gamma$ at which $\pi_\star$ decreases locally. In fact, this happens even when the finder's share is close to 1.

\begin{theorem} \label{thm:foraging_braess}    
    Consider the Foraging game with $\gamma \geq 0$ and $s<1$.
    \begin{itemize}
        \item If $n=3$, then for every~$\gamma \geq 0$, there is a unique ESS.
        Moreover, for every~$s < 1/2$, there exist $\gamma_{\min}, \gamma_{\max} > 0$ such that  the payoff $\pi_\star(\gamma)$ (and hence, also the total production) at ESS is strictly decreasing in $\gamma$ on the interval~$[\gamma_{\min},\gamma_{\max}]$.
        
        \item If $n=2$, then for every~$\gamma \neq \gamma_s$, where $\gamma_s = \frac{1+s}{1-s}$, there is a unique ESS.
        Moreover, $\pi_\star(\gamma)$ is increasing on~$[0,\gamma_s)$ and on $(\gamma_s,+\infty]$.
        However, for every $\epsilon \in (0,1/2)$, $\pi_\star(\gamma_s-\epsilon)>\pi_\star(\gamma_s+\epsilon)$.
    \end{itemize}
\end{theorem}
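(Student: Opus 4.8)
The plan is to reduce both cases to a single scalar function, the incentive to switch to producer, $g_\gamma(p):=\payoff{1}{p}-\payoff{0}{p}$, against a population in which every other player produces with probability $p$. Conditioning on the number $K$ of producers among the $n-1$ co-players (a $\mathrm{Bin}(n-1,p)$ variable), \Cref{eq:producer_scrounger_payoff} gives $g_\gamma(p)=\bbE\big[\pi_\producer^{(K+1)}-\pi_\scrounger^{(K)}\big]$, so everything is governed by the finite sequence $h(k):=\pi_\producer^{(k+1)}-\pi_\scrounger^{(k)}$. The first step is to compute it: after simplification $h(k)=\big(s-(1-s)\gamma\big)+(1-s)(1+\gamma)\big(\tfrac1{n-k}-\tfrac{k}{n+1-k}\big)$, and the observation that unlocks the whole proof is that the last factor does not depend on $k$ when $n=2$ and is arithmetic in $k$ when $n=3$ (the values $\tfrac13,\tfrac16,0$). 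Hence $h$ is affine in $k$ (constant when $n=2$), so $g_\gamma$ is affine and strictly decreasing in $p$ when $n=3$, and constant in $p$ when $n=2$; in particular $g_\gamma$ has at most one zero in $[0,1]$ and the equilibrium structure becomes explicit.

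For $n=3$: from $g_\gamma(1)=s-(1-s)\gamma$ and $g_\gamma(0)=s-(1-s)\gamma+\tfrac13(1-s)(1+\gamma)$ I would read off the thresholds $\gamma_0:=\tfrac{s}{1-s}\le\gamma_1:=\tfrac{1+2s}{2(1-s)}$ at which these change sign, obtaining three regimes: $p_\star=1$ for $\gamma\le\gamma_0$, $p_\star=0$ for $\gamma\ge\gamma_1$, and, for $\gamma\in(\gamma_0,\gamma_1)$, the unique interior zero $p_\star(\gamma)=\dfrac{1+2s-2(1-s)\gamma}{(1-s)(1+\gamma)}\in(0,1)$, which is the unique ESS by \Cref{lem:ESS_sufficient}. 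The corner cases ($g_\gamma$ of constant sign, or vanishing at an endpoint) follow from \Cref{lem:ESS_sufficient} directly or from the standard remark that when one pure strategy strictly dominates, the corresponding corner is the unique ESS; this yields existence and uniqueness for every $\gamma\ge0$.

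To pass from $p_\star$ to the payoff I would invoke conservation of calories: the finder's share plus the equally-shared remainder together account for every calorie found, so in a symmetric profile where all play $p$ the group consumes exactly what it finds, $n(p+\gamma)$ in expectation, and by symmetry each player's payoff is $p+\gamma$; hence $\pi_\star(\gamma)=p_\star(\gamma)+\gamma$, and the total production at equilibrium is $n\,\pi_\star(\gamma)$ (settling the parenthetical claim). Differentiating the interior closed form — or differentiating the identity $g_\gamma(p_\star(\gamma))=0$ implicitly — gives $\pi_\star'(\gamma)=1-\dfrac{3}{(1-s)(1+\gamma)^2}$, which is negative exactly for $\gamma<\sqrt{3/(1-s)}-1$. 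The one load-bearing inequality, and what I would flag as the main obstacle, is to check that this region actually meets $(\gamma_0,\gamma_1)$: since $1+\gamma_0=\tfrac1{1-s}$, the condition $\gamma_0<\sqrt{3/(1-s)}-1$ squares to $\tfrac1{1-s}<3$, i.e.\ $s<2/3$, which is implied by $s<1/2$. Then $(\gamma_0,\min(\gamma_1,\sqrt{3/(1-s)}-1))$ is a nonempty interval on which $\pi_\star$ is $C^1$ with negative derivative, and any closed subinterval $[\gamma_{\min},\gamma_{\max}]$ of it with $\gamma_{\min}>0$ completes the $n=3$ case.

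For $n=2$, $g_\gamma$ does not depend on $p$ and vanishes exactly at $\gamma_s=\tfrac{1+s}{1-s}$. Hence for $\gamma<\gamma_s$ the producer strategy strictly dominates, so $p_\star=1$ is the unique ESS and, by the conservation identity, $\pi_\star(\gamma)=1+\gamma$; for $\gamma>\gamma_s$ the scrounger strategy strictly dominates, so $p_\star=0$ is the unique ESS and $\pi_\star(\gamma)=\gamma$; and at $\gamma=\gamma_s$ one has $\payoff{p_\star}{q}-\payoff{q}{q}=(p_\star-q)g_{\gamma_s}(q)=0$ for all $q$, so condition (ii) of the ESS definition can never be met and no ESS exists, which is exactly why $\gamma_s$ is excluded. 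Strict monotonicity of $\pi_\star$ on $[0,\gamma_s)$ and on $(\gamma_s,+\infty)$ is immediate from the two formulas, and the downward jump of height exactly $1$ at $\gamma_s$ gives $\pi_\star(\gamma_s-\epsilon)-\pi_\star(\gamma_s+\epsilon)=(1+\gamma_s-\epsilon)-(\gamma_s+\epsilon)=1-2\epsilon$, which is positive for every $\epsilon\in(0,1/2)$, as claimed.
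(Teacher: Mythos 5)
Your proposal is correct, and it reaches the theorem by a genuinely different route from the paper. The paper computes $\payoff{1}{p}$ and $\payoff{0}{p}$ in closed form for general $n$ via binomial identities (\Cref{claim:expectation_binomial_1,claim:expectation_binomial_3}), reduces the equilibrium condition to $f(p_\star)=A(\gamma)$ for a function $f$ whose strict monotonicity requires a multi-stage differentiation argument (\Cref{claim:f_decreasing}), and then obtains the decreasing interval non-constructively, by comparing the endpoint values $\pi_\star(\gamma_1)=1+\gamma_1$ and $\pi_\star(\gamma_2)=\gamma_2$ and invoking continuity — which is exactly where the hypothesis $s<1/2$ enters. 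You instead exploit the fact that for $n\le 3$ the per-configuration incentive $h(k)=\pi_\producer^{(k+1)}-\pi_\scrounger^{(k)}$ is affine in $k$, so its expectation under $\mathrm{Bin}(n-1,p)$ is affine in $p$; this gives the same thresholds (your $\gamma_0,\gamma_1$ are the paper's $\gamma_1,\gamma_2$) and an explicit closed form for $p_\star$, and the conservation-of-calories identity $\payoff{p}{p}=p+\gamma$ — which the paper never states but which is easily verified against \Cref{lem:expectation_payoff} — turns this into $\pi_\star=p_\star+\gamma$ and the explicit derivative $1-3/((1-s)(1+\gamma)^2)$. This buys you strictly more than the theorem asks: you locate the decreasing region exactly and show the phenomenon persists for all $s<2/3$, not just $s<1/2$. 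The trade-offs are that your affineness trick is special to $n\le 3$ (for $n=4$ the sequence $\tfrac1{n-k}-\tfrac{k}{n+1-k}$ is no longer arithmetic, whereas the paper's $f$/$A$ machinery and \Cref{lem:weak_braess} apply to all $n$ satisfying the stated bound), and that your treatment of the pure-strategy regimes leans on the standard dominance remark that the paper spells out explicitly around \Cref{eq:domination}; that remark is needed because \Cref{lem:ESS_sufficient} as stated presupposes the indifference condition (i), which fails strictly off the thresholds. Both of these are presentational rather than substantive gaps; the $n=2$ analysis, including the non-existence of an ESS at $\gamma_s$ and the jump of height $1-2\epsilon$, matches the paper's exactly.
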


\subsection{Proof of Theorem \ref{thm:foraging_braess}}

Towards proving the theorem, we first establish the following lemma, which quantifies the expected payoffs of the two pure strategies, conditioning on other agents choosing to be producers with probability $p$.  
\begin{lemma} \label{lem:expectation_payoff}
    For every $0\leq p<1$,
    \begin{equation*}
        \payoff{1}{p} = s F_\producer + (1-s)F_\producer \cdot \frac{1-p^n}{n(1-p)},
    \end{equation*}
    and
    \begin{equation*}
        \payoff{0}{p} = F_\scrounger + (1-s)F_\producer \cdot p \cdot \frac{n(1-p)+p^{n}-1}{n(1-p)^2}.
    \end{equation*}
    These expressions can be extended by continuity at $p=1$, giving
    $\payoff{1}{1} = F_\producer$
    and
    $\payoff{0}{1} = F_\scrounger + (n-1)(1-s)F_\producer/2$.
\end{lemma}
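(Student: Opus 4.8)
The plan is to compute the two pure-strategy payoffs $\payoff{1}{p}$ and $\payoff{0}{p}$ directly, after which $\payoff{q}{p} = q\,\payoff{1}{p} + (1-q)\,\payoff{0}{p}$ follows by linearity of expectation over the tagged player's own coin. Fix the tagged player and let $K$ be the number of producers among the other $n-1$ players; since these are producers independently with probability $p$, we have $K\sim\mathrm{Bin}(n-1,p)$. If the tagged player is a producer there are $1+K$ producers in total, and by \eqref{eq:producer_scrounger_payoff} its payoff is $\pi_\producer^{(1+K)} = sF_\producer + (1-s)F_\producer\cdot\frac{1}{n-K}$, because $1+n-(1+K)=n-K$. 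Hence $\payoff{1}{p} = sF_\producer + (1-s)F_\producer\,\bbE\!\left[\frac{1}{n-K}\right]$, and similarly, if the tagged player is a scrounger there are $K$ producers, so $\payoff{0}{p} = F_\scrounger + (1-s)F_\producer\,\bbE\!\left[\frac{K}{n+1-K}\right]$.

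The technical core is evaluating these binomial expectations. For the first I would use the elementary identity $\binom{n-1}{j}\frac{1}{n-j} = \frac1n\binom{n}{j}$ (valid for $0\le j\le n-1$), which rewrites $\bbE\!\left[\frac{1}{n-K}\right] = \sum_{j=0}^{n-1}\binom{n-1}{j}p^j(1-p)^{n-1-j}\frac{1}{n-j}$ as $\frac{1}{n(1-p)}\sum_{j=0}^{n-1}\binom{n}{j}p^j(1-p)^{n-j}$; by the binomial theorem this sum equals $1-p^n$, so $\bbE\!\left[\frac{1}{n-K}\right] = \frac{1-p^n}{n(1-p)}$, which gives the stated formula for $\payoff{1}{p}$ when $p<1$.

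For $\payoff{0}{p}$ I would bypass a second, messier binomial manipulation by invoking a conservation identity: in the Foraging game every unit of food found is eventually consumed, so in any realized strategy profile the sum of the $n$ payoffs equals the total food found (this is a one-line check on \eqref{eq:producer_scrounger_payoff}). When all $n$ players are producers with probability $p$, the expected total food found is $np\,F_\producer + n(1-p)F_\scrounger$, so by symmetry each player's expected payoff is $p\,F_\producer + (1-p)F_\scrounger$; but that expectation also equals $p\,\payoff{1}{p} + (1-p)\payoff{0}{p}$. Solving $p\,\payoff{1}{p} + (1-p)\payoff{0}{p} = pF_\producer + (1-p)F_\scrounger$ for $\payoff{0}{p}$ and substituting the formula for $\payoff{1}{p}$, the resulting rational function of $p$ simplifies over the common denominator $n(1-p)^2$ to $F_\scrounger + (1-s)F_\producer\,p\cdot\frac{n(1-p)+p^n-1}{n(1-p)^2}$. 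Equivalently, one may compute $\bbE\!\left[\frac{K}{n+1-K}\right]$ head-on via the companion identity $\binom{n-1}{j}\frac{j}{n+1-j} = \frac{n-j}{n}\binom{n}{j-1}$, which gives an independent check.

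The $p=1$ values require no limiting argument: $p=1$ means $K=n-1$ deterministically, so \eqref{eq:producer_scrounger_payoff} directly gives $\payoff{1}{1} = \pi_\producer^{(n)} = F_\producer$ and $\payoff{0}{1} = \pi_\scrounger^{(n-1)} = F_\scrounger + (n-1)(1-s)F_\producer/2$; these agree with the $p\to 1$ limits of the two formulas (using $\frac{1-p^n}{n(1-p)}\to 1$), which is exactly the asserted continuous extension. The only genuinely delicate point is the algebraic collapse of the numerator in the $\payoff{0}{p}$ computation — routine but easy to botch — and the conservation-identity route is precisely what keeps it short; everything else is bookkeeping over a binomial distribution.
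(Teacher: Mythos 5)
Your proposal is correct, and for the scrounger payoff it takes a genuinely different route from the paper. For $\payoff{1}{p}$ you do essentially what the paper does: reduce to $\bbE[1/(1+X)]$ for a binomial variable and evaluate it with the index-shift identity $\binom{n-1}{j}\frac{1}{n-j}=\frac{1}{n}\binom{n}{j}$ (the paper's \Cref{claim:expectation_binomial_1}). For $\payoff{0}{p}$, however, the paper computes $\bbE\bigl[Y_p/(1+n-Y_p)\bigr]$ head-on via a second, considerably longer chain of binomial manipulations (\Cref{claim:expectation_binomial_3}), whereas you derive it from the conservation identity $k\,\pi_\producer^{(k)}+(n-k)\,\pi_\scrounger^{(k)}=kF_\producer+(n-k)F_\scrounger$ (which does follow in one line from \Cref{eq:producer_scrounger_payoff}, since the producer's share of the fallen food and the $k$ contributions to each scrounger telescope), combined with exchangeability and the decomposition $\payoff{p}{p}=p\,\payoff{1}{p}+(1-p)\,\payoff{0}{p}$; solving for $\payoff{0}{p}$ then reproduces the stated formula, as I have checked. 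This shortcut is sound and arguably cleaner and less error-prone, but note what each approach buys: your argument exploits the Foraging game's specific property that all food found is eventually consumed (the same fact the paper invokes in the Discussion to identify total production with payoff at equilibrium), while the paper's direct computation of the binomial expectation is a self-contained identity reusable outside this particular game. Your treatment of $p=1$ via the deterministic case $K=n-1$ is also fine and matches the limits; the only cosmetic gap is that you verify the limit $\frac{1-p^n}{n(1-p)}\to 1$ explicitly but only assert agreement for the second expression, whose limit $(n-1)/2$ requires a short Taylor expansion of $n(1-p)+p^n-1$ — the paper is no more explicit on this point, so nothing is lost.
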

\begin{proof}
    Fix a player~$i$. Consider the case that Player~$i$ is a producer, and that each player $j\neq i$ is a producer with probability~$p$.
    Let $X_p$ be the random variable indicating the number of scroungers in the population. By \Cref{eq:producer_scrounger_payoff},
    \begin{equation*}
        \payoff{1}{p} = s F_\producer  + (1-s) F_\producer \cdot \bbE \pa{ \frac{1}{1+X_p} }.
    \end{equation*}
    By definition, $X_p \sim \calB(n-1,1-p)$. The first part of the claim,  concerning $\payoff{1}{p}$, now follows using \Cref{claim:expectation_binomial_1}, that implies that $\bbE (1/(1+X_p)) = \frac{1-p^n}{n(1-p)}$.
    
    Now, consider the case that Player~$i$ is a scrounger, and that each player $j\neq i$ is a producer with probability~$p$.
    Let $Y_p$ be the random variable indicating the number of producers in the population. By \Cref{eq:producer_scrounger_payoff},
    \begin{equation*}
        \payoff{0}{p} = F_\scrounger + (1-s) F_\producer \cdot \bbE \pa{ \frac{Y_p}{1+n-Y_p} }.
    \end{equation*}
    By definition, $Y_p \sim \calB(n-1,p)$. The second part of the claim,  concerning $\payoff{0}{p}$, now follows using \Cref{claim:expectation_binomial_3}, that implies that
    \begin{equation*}
        \bbE\pa{\frac{Y_p}{1+n-Y_p}} = \bbE\pa{\frac{Y_p}{2+(n-1)-Y_p}} = p \cdot \frac{n(1-p)+p^{n}-1}{n(1-p)^2}.
    \end{equation*}
    This completes the proof of Lemma \ref{lem:expectation_payoff}.
\end{proof}

In order to characterize the (unique) ESS,
we first define the following quantities:
\begin{equation*}
    A(\gamma) = \frac{n(F_\scrounger - s F_\producer)}{(1-s)F_\producer} = \frac{n(\gamma - s(1+\gamma))}{(1-s)(1+\gamma)}, \quad \gamma_1 = \frac{2}{(n-1)(1-s)}-1, \quad \gamma_2 = \frac{n}{(n-1)(1-s)}-1.
\end{equation*}

\begin{claim} \label{claim:simple_computations}
		We have $A(\gamma_1) = -n(n-3)/2$ and $A(\gamma_2) = 1$.
	\end{claim}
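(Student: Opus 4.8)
This claim is purely computational, so the plan is to organize the algebra so that both identities become transparent at once. The first step I would take is to rewrite the numerator of $A(\gamma)$ in terms of $1+\gamma$: since $\gamma - s(1+\gamma) = (1-s)(1+\gamma) - 1$, this turns $A$ into
\[
    A(\gamma) = \frac{n\bigl((1-s)(1+\gamma) - 1\bigr)}{(1-s)(1+\gamma)} = n - \frac{n}{(1-s)(1+\gamma)},
\]
so that $A$ depends on $\gamma$ only through the product $(1-s)(1+\gamma)$. The task then reduces to evaluating this single quantity at $\gamma = \gamma_1$ and at $\gamma = \gamma_2$, which is immediate from their definitions.

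Concretely, from $1 + \gamma_1 = \tfrac{2}{(n-1)(1-s)}$ I get $(1-s)(1+\gamma_1) = \tfrac{2}{n-1}$, and hence
\[
    A(\gamma_1) = n - \frac{n(n-1)}{2} = \frac{2n - n(n-1)}{2} = -\frac{n(n-3)}{2}.
\]
Similarly, from $1 + \gamma_2 = \tfrac{n}{(n-1)(1-s)}$ I get $(1-s)(1+\gamma_2) = \tfrac{n}{n-1}$, so that $A(\gamma_2) = n - (n-1) = 1$, as claimed.

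There is no real obstacle here: the only place that calls for a little care is the opening manipulation — spotting that $\gamma - s(1+\gamma) = (1-s)(1+\gamma)-1$, which collapses $A$ to a function of $(1-s)(1+\gamma)$ alone — after which both substitutions drop out in a line each. I would also note in passing that the standing hypothesis $s<1$ guarantees $1-s\neq 0$, so every division above is well defined; the actual sign or size of $\gamma_1,\gamma_2$ is irrelevant to the identity.
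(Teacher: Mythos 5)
Your proof is correct and follows essentially the same route as the paper: both rewrite $A(\gamma)$ as $n\bigl(1 - \tfrac{1}{(1-s)(1+\gamma)}\bigr)$ and then substitute $(1-s)(1+\gamma_1) = \tfrac{2}{n-1}$ and $(1-s)(1+\gamma_2) = \tfrac{n}{n-1}$. The algebra checks out in both evaluations.
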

	\begin{proof}
		First, we rewrite
        \begin{equation} \label{eq:new_expression_A}
            A(\gamma) = \frac{n(\gamma - s(1+\gamma))}{(1-s)(1+\gamma)} = n \cdot \frac{\gamma(1-s) - s}{(1-s)(1+\gamma)}
            = n \cdot \pa{1- \frac{1}{(1-s)(1+\gamma)}}.
        \end{equation}
        Plugging in the definition of~$\gamma_1$ and $\gamma_2$, we obtain
        \begin{equation*}
            A(\gamma_1) = n \cdot \pa{1- \frac{1}{\frac{2}{n-1}}} = -\frac{n(n-3)}{2} \quad \text{ and } \quad A(\gamma_2) = n \cdot \pa{1- \frac{1}{\frac{n}{n-1}}} = 1,
        \end{equation*}
        as stated.
	\end{proof}

Next, for every~$\gamma$, the following result identifies the unique ESS.
\begin{lemma} \label{lem:equilibrium_condition}
    ~
    \begin{itemize}
        \item[(a)] For every $n \geq 2$, for every~$\gamma \in [0,\gamma_1) \cup (\gamma_2, +\infty)$, there is unique ESS, termed $p_\star(\gamma)$, that satisfies $p_\star(\gamma) = 1$ on $[0,\gamma_1)$ and $p_\star(\gamma) = 0$ on $(\gamma_2,+\infty)$.
        \item[(b)] for every $n \geq 3$, for every~$\gamma \in [\gamma_1,\gamma_2]$, there is unique ESS, termed $p_\star(\gamma)$. Moreover, $p_\star$ is continuously differentiable on $[\gamma_1,\gamma_2]$, $p_\star(\gamma_1) = 1$ and $p_\star(\gamma_2) = 0$.
    \end{itemize}
\end{lemma}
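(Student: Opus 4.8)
Everything reduces to tracking the sign, as $p$ ranges over $[0,1]$, of the payoff difference $D(p):=\payoff{1}{p}-\payoff{0}{p}$. By \Cref{lem:ESS_sufficient} it suffices, for each $\gamma$, to exhibit a point $p_\star$ with $D(p_\star)=0$, $D>0$ on $[0,p_\star)$ and $D<0$ on $(p_\star,1]$; I will also use the standard facts that the pure profile $p=1$ (resp.\ $p=0$) is an ESS exactly when $\payoff{1}{1}\ge\payoff{0}{1}$ (resp.\ $\payoff{0}{0}\ge\payoff{1}{0}$), with the obvious second-order tie-breaker, and that any \emph{interior} ESS $p$ must satisfy $\payoff{1}{p}=\payoff{0}{p}$ (otherwise the Nash condition fails, since $q\mapsto\payoff{q}{p}$ is affine and, when $D(p)\neq0$, strictly monotone, so its maximum over $[0,1]$ is attained at an endpoint and strictly exceeds its value at the interior point $p$). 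Plugging the formulas of \Cref{lem:expectation_payoff} into $D$ and clearing the common denominator $n(1-p)^2$, a short computation collapses the numerator to $1-p^n-np(1-p)$, and absorbing the constant $sF_\producer-F_\scrounger$ via the definition of $A(\gamma)$ gives
\[ D(p)=\frac{(1-s)F_\producer}{n}\,\bigl(g(p)-A(\gamma)\bigr),\qquad g(p):=\frac{1-p^n-np(1-p)}{(1-p)^2}. \]
By \eqref{eq:new_expression_A}, $A(\gamma)$ is continuous and \emph{strictly increasing} in $\gamma$, so the lemma amounts to the assertion that $A(\gamma)$ sweeps through the range of $g$ precisely on the window $[\gamma_1,\gamma_2]$.

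\textbf{Core step (the main work).} I claim $g$ extends to a polynomial on $[0,1]$ and is \emph{strictly decreasing} there, with $g(0)=1$ and $g(1)=-n(n-3)/2$; by \Cref{claim:simple_computations} these endpoint values are exactly $A(\gamma_2)$ and $A(\gamma_1)$. To prove the claim I would expand $\tfrac{1-p^n}{(1-p)^2}$ and $\tfrac{np}{1-p}$ as power series (equivalently, perform the polynomial division), obtaining the closed form
\[ g(p)=1-\sum_{m=1}^{n-2}(n-1-m)\,p^{m}, \]
where for $n=2$ the sum is empty and $g\equiv1$ --- which is precisely why part (b) requires $n\ge3$ and why $\gamma_1=\gamma_2$ when $n=2$. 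Since every coefficient $n-1-m$ with $1\le m\le n-2$ is a positive integer, $g'(p)=-\sum_{m=1}^{n-2}m(n-1-m)p^{m-1}<0$ on all of $[0,1]$ when $n\ge3$ (at $p=0$ the surviving term is $-(n-2)<0$). Evaluating the closed form at $p=0$ and $p=1$ recovers $1$ and $-n(n-3)/2$. Equivalently, one may verify directly the polynomial identity $1-p^n-np(1-p)=(1-p)^2\bigl(1-\sum_{m=1}^{n-2}(n-1-m)p^m\bigr)$; either way, this is the one genuinely computational ingredient.

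\textbf{Assembling the cases.} For part (a): if $\gamma\in[0,\gamma_1)$ then monotonicity of $A$ gives $A(\gamma)<A(\gamma_1)=g(1)=\min_{[0,1]}g$, so $D>0$ throughout $[0,1]$; hence no interior point is an ESS and $p=0$ is not an ESS, while $\payoff{1}{1}>\payoff{0}{1}$ makes $p=1$ an ESS, necessarily the unique one, so $p_\star(\gamma)=1$. Symmetrically, if $\gamma>\gamma_2$ then $A(\gamma)>A(\gamma_2)=g(0)=\max_{[0,1]}g$, so $D<0$ throughout and $p_\star(\gamma)=0$ is the unique ESS. For part (b), $n\ge3$: here $\gamma_1<\gamma_2$ (indeed $\gamma_2-\gamma_1=\tfrac{n-2}{(n-1)(1-s)}>0$), and for $\gamma\in[\gamma_1,\gamma_2]$ we have $A(\gamma)\in[g(1),g(0)]$. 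Since $g$ is continuous and strictly decreasing, it is a bijection $[0,1]\to[g(1),g(0)]$, so there is a unique $p_\star(\gamma)\in[0,1]$ with $g(p_\star(\gamma))=A(\gamma)$, and $D>0$ strictly below $p_\star(\gamma)$ and $D<0$ strictly above it; \Cref{lem:ESS_sufficient} then certifies $p_\star(\gamma)$ as the unique ESS. The endpoint identities $g(1)=A(\gamma_1)$ and $g(0)=A(\gamma_2)$ yield $p_\star(\gamma_1)=1$ and $p_\star(\gamma_2)=0$; and since $A$ is $C^1$ by \eqref{eq:new_expression_A} and $g$ is a polynomial with $g'$ nowhere zero on $[0,1]$, the composition $p_\star=g^{-1}\circ A$ is continuously differentiable on $[\gamma_1,\gamma_2]$.

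\textbf{Expected obstacle.} The only non-routine step is the strict monotonicity of $g$ on $[0,1]$ (equivalently, the polynomial identity above); once that is established, the rest is bookkeeping with \Cref{lem:ESS_sufficient}, \Cref{claim:simple_computations}, and the monotonicity of $A$. Two minor points deserve care: the removable singularity of $g$ at $p=1$, handled by the continuity extensions already recorded in \Cref{lem:expectation_payoff}; and the degenerate case $\gamma_1\le0$, in which $[0,\gamma_1)$ is empty and the relevant window is $[\max(0,\gamma_1),\gamma_2]$ --- the argument via $g$ and $A$ goes through unchanged.
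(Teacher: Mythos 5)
Your proposal is correct, and its overall architecture is the same as the paper's: reduce everything to the sign of $\payoff{1}{p}-\payoff{0}{p}$, which factors as a positive constant times $f(p)-A(\gamma)$ (your $g$ is exactly the paper's $f$), then use strict monotonicity of $f$ in $p$ and of $A$ in $\gamma$ together with the endpoint identities $f(1)=A(\gamma_1)$, $f(0)=A(\gamma_2)$ from \Cref{claim:simple_computations}, and close with \Cref{lem:ESS_sufficient}. The one place where you genuinely diverge is the key technical step, the strict monotonicity of $f$ on $[0,1]$ for $n\ge 3$ (\Cref{claim:f_decreasing}). The paper proves this by computing $f'$ as a quotient and then controlling the sign of its numerator through a chain of auxiliary functions $g_0,g_1,g_2,g_3$ and three rounds of differentiation. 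You instead exhibit the closed form $f(p)=1-\sum_{m=1}^{n-2}(n-1-m)p^m$; I checked the identity $1-p^n-np(1-p)=(1-p)^2\bigl(1-\sum_{m=1}^{n-2}(n-1-m)p^m\bigr)$ and it is correct (e.g.\ writing $\sum_{k=0}^{n-1}p^k-np=(1-p)+\sum_{k=2}^{n-1}(p^k-p)$ and factoring each summand by $1-p$). This is a strictly better route: the coefficients are manifestly positive, so $f'<0$ on all of $[0,1]$ is immediate, the degenerate case $n=2$ ($f\equiv 1$) falls out as the empty sum, and the limit value $f(1)=1-\binom{n-1}{2}=-n(n-3)/2$ is obtained by direct evaluation rather than the Taylor expansion of \Cref{claim:lim_f}. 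The only cosmetic looseness is your appeal to ``standard facts'' about when a pure profile is an ESS in part (a); the paper instead spells out the strict-domination inequality $\payoff{p}{q}<\payoff{1}{q}$ for $p<1$, but your version is filled in by exactly that one line, so nothing is missing.
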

\begin{proof}
	Define the following function for $0\leq p<1$.
	\begin{equation*} 
		f(p) = \frac{1}{1-p} \pa{ \frac{1-p^n}{1-p} - np }.
	\end{equation*}
    
   We next identify $\lim_{p\rightarrow 1}f(p)$.
    \begin{observation} \label{claim:lim_f}
        Function $f$ can be extended to a continuous function at $p=1$ by setting $f(1) = -n(n-3)/2$.
    \end{observation}
    \begin{proof}
        Let $x = 1-p$. Using Taylor expansion at $x=0$, we have:
        \begin{align*}
            f(x) &= \frac{1}{x} \pa{ \frac{1-(1-x)^n}{x} - n(1-x) } = \frac{1}{x} \pa{ \frac{nx-\frac{n(n-1)}{2}\, x^2 + o(x^3)}{x} - n(1-x) } \\
            &= \frac{1}{x} \pa{ n \pa{1-\tfrac{n-1}{2} \, x + o(x^2)} - n(1-x) } = \frac{n}{x} \pa{ - \tfrac{n-3}{2} \, x + o(x^2) } = -\frac{n(n-3)}{2} + o(x).
        \end{align*}
     Therefore, $\lim_{p \rightarrow 1} f(p) = -\frac{n(n-3)}{2}$, which concludes the proof of the observation.
    \end{proof}
 
    To compute the ESS, we need to compare $\payoff{1}{p}$ and $\payoff{0}{p}$.
    \begin{claim} \label{claim:new_nash_condition}
       For every~$p \in [0,1]$, $\payoff{1}{p} > \payoff{0}{p} \iff f(p) > A(\gamma)$ and $\payoff{1}{p} < \payoff{0}{p} \iff f(p) < A(\gamma)$.
    \end{claim}
    \begin{proof}
        By \Cref{lem:expectation_payoff}, for every~$p \in [0,1]$,
        \begin{align*}
            \payoff{1}{p} > \payoff{0}{p} &\iff s F_\producer + (1-s)F_\producer \cdot \frac{1-p^n}{n(1-p)} > F_\scrounger + (1-s)F_\producer \cdot p \cdot \frac{n(1-p)+p^{n}-1}{n(1-p)^2} \\
            &\iff \frac{1-p^n}{1-p} - p \cdot \frac{n(1-p)+p^{n}-1}{(1-p)^2} > \frac{n(F_\scrounger - s F_\producer)}{(1-s)F_\producer}.
        \end{align*}
        By definition, the right hand side is equal to~$A(\gamma)$. Let us rewrite the left hand side:
        \begin{equation*}
            \frac{1-p^n}{1-p} - p \cdot \frac{n(1-p)+p^{n}-1}{(1-p)^2} = \frac{1}{1-p} \pa{ (1-p) \frac{1-p^n}{1-p} - np + p \frac{1-p^n}{1-p}} = \frac{1}{1-p} \pa{ \frac{1-p^n}{1-p} - np } = f(p),
        \end{equation*}
        which concludes the proof of the first equivalence in \Cref{claim:new_nash_condition}. The second equivalence is obtained similarly.
    \end{proof}
	\begin{claim} \label{claim:f_decreasing}
         $f$ is non-increasing in $p$.
		Moreover, if $n \geq 3$, then $f$ is strictly decreasing in $p$.
	\end{claim}
	\begin{proof}
        First, consider the case that $n=2$. Then,
        \begin{equation*}
            f(p) = \frac{1}{1-p} \pa{ \frac{1-p^2}{1-p} - 2p } = \frac{1}{1-p} \pa{ (1+p) - 2p } = 1,
        \end{equation*}
        so $f$ is non-increasing.
        Now, consider the case that $n \geq 3$.
        Let us write $f(p) = u(p)/v(p)$, with
        \begin{equation*}
            u(p) = \frac{1-p^n}{1-p} - np, \quad v(p) = 1-p.
        \end{equation*}
        We have
        \begin{equation*}
            u'(p) = \frac{-n p^{n-1}(1-p) + (1-p^n)}{(1-p)^2} - n, \quad v'(p) = -1,
        \end{equation*}
        so
        \begin{equation*}
            u'(p) \cdot v(p) = -n p^{n-1} + \frac{1-p^n}{1-p} - n(1-p), \quad u(p) \cdot v'(p) = - \frac{1-p^n}{1-p} + np.
        \end{equation*}
        Therefore,
        \begin{equation*}
            u'(p) \cdot v(p) - u(p) \cdot v'(p) = -n p^{n-1} + 2 \frac{1-p^n}{1-p} - n = 2 \frac{1-p^n}{1-p} - n(1+p^{n-1}).
        \end{equation*}
        Finally,
        \begin{equation} \label{eq:f_derivative}
            f'(p) = \frac{u'(p) \cdot v(p) - u(p) \cdot v'(p)}{v(p)^2} = - \frac{n(1-p)(1+p^{n-1}) - 2(1-p^n)}{(1-p)^3}.
        \end{equation}
        Let us define the ratio:
        \begin{equation*}
            g_0(p) = \frac{n(1-p)(1+p^{n-1})}{2(1-p^n)}.
        \end{equation*}
        Next, we  show that $g_0$ is strictly greater than $1$. 
        To this aim, we study $g_0$ by differentiating it several times. Define:
        \begin{equation*}
            g_1(p) = (n-1)p^{n-2}(1-p^2)+p^{2n-2}-1, \quad g_2(p) = 2p^n-np^2+n-2, \quad g_3(p) = -2np(1-p^{n-2}).
        \end{equation*}
        Since $n \geq 3$, 
        $g_3(p) < 0$.
        We have
        \begin{equation*}
            g_2'(p) = g_3(p) < 0,
        \end{equation*}
        so $g_2$ is strictly decreasing, and hence $g_2(p) > g_2(1) = 0$.
        We have
        \begin{equation*}
            g_1'(p) = (n-1)p^{n-3} g_2(p) > 0,
        \end{equation*}
        so $g_1$ is strictly increasing, and $g_1(p) < g_1(1) = 0$.
        Eventually, we have:
        \begin{align*}
            g_0'(p) &= \frac{n}{2} \cdot \frac{\pa{-1-p^{n-1}+(n-1)p^{n-2}(1-p)}\cdot(1-p^n)+(1-p)(1+p^{n-1})\cdot n\, p^{n-1}}{(1-p^n)^2} \\
            &= \frac{n}{2} \cdot \frac{ -1-p^{n-1}+(n-1)p^{n-2}(1-p)+p^n+p^{2n-1}-(n-1)p^{2n-2}(1-p)+n(1-p)p^{n-1}+n(1-p)p^{2n-2} }{(1-p^n)^2} \\
            &= \frac{n}{2} \cdot \frac{ p^{n-2} \pa{ -p+(n-1)(1-p)+p^2+np(1-p) } +p^{2n-2}-1 }{(1-p^n)^2} \\
            &= \frac{n}{2} \cdot \frac{g_1(p)}{(1-p^n)^2} < 0,
        \end{align*}
        so $g_0$ is strictly decreasing, and $g_0(p) > g_0(1) = 1$.
        Therefore, $n(1-p)(1+p^{n-1}) > 2(1-p^n)$. By \Cref{eq:f_derivative}, this implies that $f'(p) < 0$, which concludes the proof of \Cref{claim:f_decreasing}.
	\end{proof}
 
	\begin{claim} \label{claim:A_increasing}
		Function $A$ is (strictly) increasing in $\gamma$.
	\end{claim}
	\begin{proof}
		Using \cref{eq:new_expression_A}, we obtain
        \begin{equation*}
            \frac{dA(\gamma)}{d\gamma} = \frac{n}{(1-s)(1 + \gamma)^2} > 0,
        \end{equation*}
        from which \Cref{claim:A_increasing} follows.
	\end{proof}

    \begin{figure} [htbp]
        \centering
        \begin{subfigure}{.3\textwidth}
            \centering
            \includegraphics[width=\linewidth]{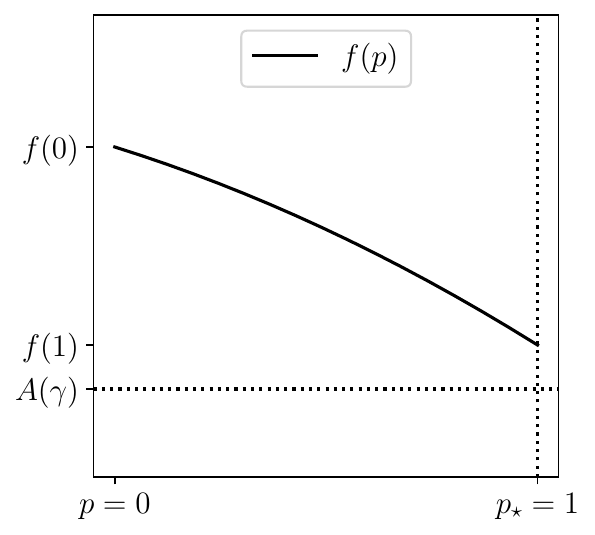}
            \caption{$\gamma = 0 < \gamma_1$.}
            \label{fig:function_f_gamma1}
        \end{subfigure}
        \hfill
        \begin{subfigure}{.3\textwidth}
            \centering
            \includegraphics[width=\linewidth]{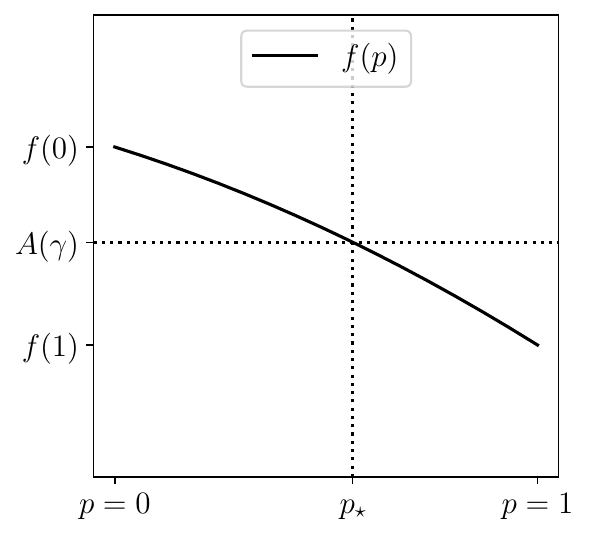}
            \caption{$\gamma = 1/2 \in [\gamma_1,\gamma_2]$.}
            \label{fig:function_f_gamma2}
        \end{subfigure}
        \hfill
        \begin{subfigure}{.3\textwidth}
            \centering
            \includegraphics[width=\linewidth]{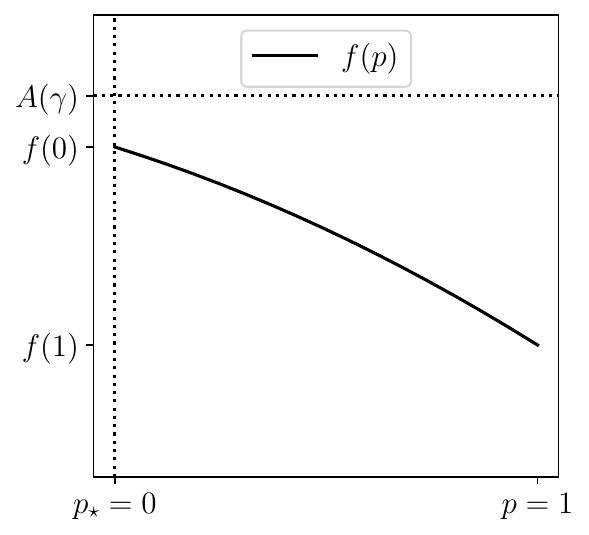}
            \caption{$\gamma = 2 > \gamma_2$.}
            \label{fig:function_f_gamma3}
        \end{subfigure}
        \caption{
        Visualization of Functions $f(p)$, $A(\gamma)$, and $p_\star(\gamma)$ illustrating their roles in the proof. The X-axis denotes the variable $p$, while each subfigure corresponds to a distinct value of $\gamma$. These figures were generated for $n=4$ and $s=2/5$.
        }
        \label{fig:function_f}
    \end{figure}
    See \Cref{fig:function_f} for an overview of the following arguments.
	
		\paragraph{Proof of (a).} Assume $n\geq 2$, and consider the case that $\gamma < \gamma_1$ (\Cref{fig:function_f_gamma1}). By Claims \ref{claim:simple_computations}, \ref{claim:f_decreasing} and \ref{claim:A_increasing}, and \Cref{claim:lim_f},
        for every~$p \in [0,1]$,
    	\begin{equation*}
    		f(p) \geq f(1) = -\frac{n(n-3)}{2} = A(\gamma_1) > A(\gamma).
    	\end{equation*}
    	By \Cref{claim:new_nash_condition}, this implies that $\payoff{1}{p} > \payoff{0}{p}$.
        Thus, for every $p<1$ and every~$q \in [0,1]$,
        \begin{equation} \label{eq:domination}
            \payoff{p}{q} = p \, \payoff{1}{q} + (1-p) \, \payoff{0}{q} < \payoff{1}{q}.
        \end{equation}
        On the one hand, \Cref{eq:domination} implies that for every~$p < 1$, $p$ cannot satisfy neither condition (i) nor (ii) in the definition of ESS.
        On the other hand, \Cref{eq:domination} implies that $p_\star = 1$ will always satisfy condition (i) in the definition of ESS.        
        Finally, we conclude that on $[0,\gamma_1)$, $p_\star(\gamma) = 1$ is the only ESS.

        \paragraph{} Next, consider the case that $\gamma > \gamma_2$ (\Cref{fig:function_f_gamma3}). By Claims \ref{claim:simple_computations}, \ref{claim:f_decreasing} and \ref{claim:A_increasing}, for every~$p \in [0,1]$,
    	\begin{equation*}
    		f(p) \leq f(0) = 1 = A(\gamma_2) < A(\gamma).
    	\end{equation*}
    	By \Cref{claim:new_nash_condition}, this implies that $\payoff{1}{p} < \payoff{0}{p}$.
        Similarly, we conclude that on $(\gamma_2,+\infty]$, $p_\star(\gamma) = 0$ is the only ESS.

        \paragraph{Proof of (b).} Consider the case that $n\geq 3$ and $\gamma_1 \leq \gamma \leq \gamma_2$ (\Cref{fig:function_f_gamma2}).
        By Claim~\ref{claim:f_decreasing}, $f : [0,1] \mapsto [f(1),f(0)]$ is a bijection, and we can consider the inverse function $f^{-1} : [f(1),f(0)] \mapsto [0,1]$.
        Moreover, by Claims \ref{claim:simple_computations}, \ref{claim:f_decreasing} and \ref{claim:A_increasing}, and \Cref{claim:lim_f},
        \begin{equation*}
		      f(1) = A(\gamma_1) \leq A(\gamma) \leq A(\gamma_2) = f(0).
	    \end{equation*}
        Therefore, there is a unique $p_\star \in [0,1]$ such that $f(p_\star) = A(\gamma)$.
        By \Cref{claim:f_decreasing,claim:new_nash_condition}, we have
        \begin{align*}
            f(p_\star) = A(\gamma) &\implies \payoff{1}{p_\star} = \payoff{0}{p_\star}, \\
            \text{for every } q < p_\star, f(q) > f(p_\star) &\implies \payoff{1}{p_\star} > \payoff{0}{p_\star}, \\
            \text{for every } q > p_\star, f(q) < f(p_\star) &\implies \payoff{1}{p_\star} < \payoff{0}{p_\star}.
        \end{align*}
        By \Cref{lem:ESS_sufficient}, this implies that $p_\star$ is the unique ESS.
        
        As a function of $\gamma$ on the interval $[\gamma_1,\gamma_2]$, $p_\star$ satisfies $p_\star(\gamma) = f^{-1}(A(\gamma))$.
        Function $f$ is continuously differentiable, and the derivative is non-zero by \Cref{claim:f_decreasing}, so $f^{-1}$ is continuously differentiable.
        Moreover, $A$ is also continuously differentiable.
        Therefore, $p_\star$ is continuously differentiable.
        Finally, $p_\star$ verifies $p_\star(\gamma_1) = f^{-1}(A(\gamma_1)) = f^{-1}(f(1)) = 1$, and $p_\star(\gamma_2) = f^{-1}(A(\gamma_2)) = f^{-1}(f(0)) = 0$.
\end{proof}

\begin{lemma} \label{lem:weak_braess}
    If $3 \leq n < \min \left\{ 1+\frac{1}{s}, 1+\frac{2}{1-s} \right\}$, then there exists $0 \leq \gamma_{\min} < \gamma_{\max}$ such that $\pi_\star(\gamma)$ is decreasing on the interval $[\gamma_{\min},\gamma_{\max}]$.
\end{lemma}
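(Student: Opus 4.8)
The plan is to reduce the statement to an explicit comparison of $\pi_\star$ at the two endpoints $\gamma_1,\gamma_2$ of the interval on which the equilibrium is mixed, and then to upgrade a strict inequality between these two values to a genuine decreasing sub-interval using the $C^1$-regularity of $\pi_\star$. First I would set the stage via \Cref{lem:equilibrium_condition}(b): the hypothesis $n<1+\tfrac{2}{1-s}$ is precisely the statement $\gamma_1=\tfrac{2}{(n-1)(1-s)}-1>0$, so (since $n\geq 3$) the ESS $p_\star$ is unique and continuously differentiable on $[\gamma_1,\gamma_2]$, with $p_\star(\gamma_1)=1$, $p_\star(\gamma_2)=0$, and $\gamma_1<\gamma_2$. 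Combining condition~(i) of \Cref{lem:ESS_sufficient} with \Cref{lem:expectation_payoff}, the equilibrium payoff satisfies $\pi_\star(\gamma)=\payoff{1}{p_\star(\gamma)}=sF_\producer+(1-s)F_\producer\cdot\tfrac{1-p_\star^n}{n(1-p_\star)}$ with $F_\producer=1+\gamma$; as the right-hand side (with its continuous extension at $p_\star=1$) is smooth in $(p_\star,\gamma)$ and $p_\star$ is $C^1$, the map $\gamma\mapsto\pi_\star(\gamma)$ is $C^1$ on $[\gamma_1,\gamma_2]$.

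Next I would evaluate the endpoints, exploiting that the equilibrium degenerates there. At $\gamma_1$, where $p_\star=1$, \Cref{lem:expectation_payoff} gives $\pi_\star(\gamma_1)=\payoff{1}{1}=F_\producer(\gamma_1)=1+\gamma_1=\tfrac{2}{(n-1)(1-s)}$ (each animal consumes everything it finds). At $\gamma_2$, where $p_\star=0$, condition~(i) of \Cref{lem:ESS_sufficient} gives $\pi_\star(\gamma_2)=\payoff{0}{0}$, and since the second term of $\payoff{0}{p}$ in \Cref{lem:expectation_payoff} carries an overall factor~$p$, this equals $F_\scrounger(\gamma_2)=\gamma_2=\tfrac{n}{(n-1)(1-s)}-1$. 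A one-line computation then yields
\[
  \pi_\star(\gamma_1)-\pi_\star(\gamma_2)\;=\;\frac{2}{(n-1)(1-s)}-\frac{n}{(n-1)(1-s)}+1\;=\;\frac{1-(n-1)s}{(n-1)(1-s)},
\]
which is strictly positive precisely when $(n-1)s<1$, i.e.\ when $n<1+\tfrac{1}{s}$ --- the remaining hypothesis. Hence $\pi_\star(\gamma_2)<\pi_\star(\gamma_1)$.

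Finally, since $\pi_\star$ is $C^1$ on $[\gamma_1,\gamma_2]$ but $\pi_\star(\gamma_1)>\pi_\star(\gamma_2)$, it cannot be non-decreasing there, so there exists $\gamma^\ast\in(\gamma_1,\gamma_2)$ with $\pi_\star'(\gamma^\ast)<0$ (for instance a mean-value point, for which $\pi_\star'(\gamma^\ast)=\tfrac{\pi_\star(\gamma_2)-\pi_\star(\gamma_1)}{\gamma_2-\gamma_1}<0$). By continuity of $\pi_\star'$ near $\gamma^\ast$ there is $\delta>0$ with $\pi_\star'<0$ on $[\gamma^\ast-\delta,\gamma^\ast+\delta]\subseteq(\gamma_1,\gamma_2)$, so $\pi_\star$ is strictly decreasing on $[\gamma_{\min},\gamma_{\max}]:=[\gamma^\ast-\delta,\gamma^\ast+\delta]$, and $\gamma_{\min}>\gamma_1>0$, as required. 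I do not expect a genuine obstacle: everything hinges on the two degenerate equilibria at $\gamma_1$ and $\gamma_2$ making $\pi_\star$ explicitly computable, after which the comparison collapses to the elementary inequality $(n-1)s<1$. The only point needing care is that a single point with negative derivative yields a whole decreasing interval (continuity of $\pi_\star$ alone would not suffice), which is exactly why I record the $C^1$-regularity of $\pi_\star$ --- inherited from \Cref{lem:equilibrium_condition}(b) and the smoothness of $\payoff{1}{\cdot}$ --- up front.
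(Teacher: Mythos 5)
Your proposal is correct and follows essentially the same route as the paper: restrict to the mixed-equilibrium interval $[\gamma_1,\gamma_2]$ from \Cref{lem:equilibrium_condition}(b), use the two hypotheses to get $\gamma_1\geq 0$ and $1+\gamma_1>\gamma_2$ (equivalently your computation $\pi_\star(\gamma_1)-\pi_\star(\gamma_2)=\frac{1-(n-1)s}{(n-1)(1-s)}>0$), and then use $C^1$-regularity of $\pi_\star$ together with $\pi_\star(\gamma_1)>\pi_\star(\gamma_2)$ to extract a decreasing sub-interval. Your final step via the mean value theorem and continuity of $\pi_\star'$ is in fact a slightly more explicit justification than the paper's one-line conclusion.
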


\begin{proof}
    Since $n \geq 3$, by definition, $\gamma_1 < \gamma_2$ and so $[\gamma_1 , \gamma_2]$ is a non-empty interval.
    We have that
    \begin{equation*}
        n \leq 1+\frac{2}{1-s} \iff \frac{2}{(n-1)(1-s)}-1 \geq 0 \iff \gamma_1 \geq 0.
    \end{equation*}
    Moreover,
    \begin{equation*}
        n < 1+\frac{1}{s} \iff \frac{2}{(n-1)(1-s)} > \frac{n}{(n-1)(1-s)}-1 \iff 1+\gamma_1 > \gamma_2.
    \end{equation*}
   Next, note that by definition: \begin{equation*}
        \pi_\star(\gamma) = p_\star(\gamma) \cdot \payoff{1}{p_\star(\gamma)}(\gamma) + (1-p_\star(\gamma)) \cdot \payoff{0}{p_\star(\gamma)}(\gamma),
    \end{equation*}
    
    By assumption on $n$, we know that $\gamma_1\geq 0$ and $1+\gamma_1>\gamma_2$.
    Since both $\payoff{1}{p}(\gamma)$ and $\payoff{0}{p}(\gamma)$ are continuously differentiable in $p$ and in $\gamma$ (from their expression in \Cref{lem:expectation_payoff}), and
        since  $p_\star(\gamma)$ is continuously differentiable in $\gamma$ on $[\gamma_1,\gamma_2]$ (by statement (b) in \Cref{lem:equilibrium_condition}),
    then $\pi_\star(\gamma)$
    is continuously differentiable in $\gamma$ on $[\gamma_1,\gamma_2]$. Moreover, it satisfies
    $\pi_\star(\gamma_1) = F_\producer = 1+\gamma_1$ (since $p_\star(\gamma_1) = 1$), and $\pi_\star(\gamma_2) = F_\scrounger =\gamma_2 < \pi_\star(\gamma_1)$ (since $p_\star(\gamma_2) = 0$).
    Therefore, we can find an interval $[\gamma_{\min},\gamma_{\max}] \subseteq [\gamma_1,\gamma_2]$ on which $\pi_\star(\gamma)$ is decreasing, which concludes the proof of \Cref{lem:weak_braess}. 
\end{proof}

\begin{proof} [Proof of \Cref{thm:foraging_braess}]
    When $n=3$,
    \begin{equation*}
        n < \min \left\{ 1+\frac{1}{s}, 1+\frac{2}{1-s} \right\} \iff 0 < s < \frac{1}{2},
    \end{equation*}
    and the first item in \Cref{thm:foraging_braess} follows as a special case of \Cref{lem:weak_braess}.

    When $n=2$, $\gamma_1 = \gamma_2 = \gamma_s = \frac{1+s}{1-s}$.
    By statement (a) in \Cref{lem:equilibrium_condition}, for every~$\gamma < \gamma_s$, there is a unique ESS satisfying $p_\star(\gamma) = 1$ and so $\pi_\star(\gamma) = 1+\gamma$.
    Similarly, for every~$\gamma > \gamma_s$, there is a unique ESS satisfying $p_\star(\gamma) = 0$ and so $\pi_\star(\gamma) = \gamma$.
    Therefore, $\pi_\star$ is increasing on $[0,\gamma_s)$ and on $(\gamma_s,+\infty)$.
    Moreover, let $\epsilon \in (0,1/2)$. Since $s \geq 0$, we have $\gamma_s \geq 1$, and
    \begin{equation*}
        \pi_\star(\gamma_s-\epsilon) = 1+\gamma_s-\epsilon > \gamma_s+\frac{1}{2} > \gamma_s+\epsilon = \pi_\star(\gamma_s+\epsilon), 
    \end{equation*}
    which establishes the second item in \Cref{thm:foraging_braess}, and thus concludes the proof of theorem.
\end{proof}

\subsection{Technical Claims}

\begin{claim} \label{claim:expectation_binomial_1}
    Let~$X \sim \calB(n,p)$. If $0 < p \leq 1$, then
    \begin{equation*}
        \bbE \pa{\frac{1}{1+X}} = \frac{1-(1-p)^{n+1}}{(n+1)p}.
    \end{equation*}
    Moreover, if $p = 0$, then $\bbE \pa{\frac{1}{1+X}} = 1$.
\end{claim}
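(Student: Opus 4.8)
The plan is to reduce the claim to the binomial theorem via a standard combinatorial identity. First I would write the expectation explicitly as
\begin{equation*}
    \bbE \pa{\frac{1}{1+X}} = \sum_{k=0}^{n} \frac{1}{1+k} \binom{n}{k} p^k (1-p)^{n-k},
\end{equation*}
and then exploit the identity $\frac{1}{1+k}\binom{n}{k} = \frac{1}{n+1}\binom{n+1}{k+1}$, which follows at once from expanding the factorials, since both sides equal $\frac{n!}{(k+1)!(n-k)!}$. Substituting this in and reindexing with $j = k+1$ rewrites the sum (assuming $p>0$, so that dividing by $p$ is legitimate) as $\frac{1}{(n+1)p}\sum_{j=1}^{n+1}\binom{n+1}{j}p^j(1-p)^{n+1-j}$.

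The next step is to complete this partial sum to a full binomial expansion. Adding and subtracting the missing $j=0$ term gives
\begin{equation*}
    \sum_{j=1}^{n+1}\binom{n+1}{j}p^j(1-p)^{n+1-j} = \big(p + (1-p)\big)^{n+1} - (1-p)^{n+1} = 1 - (1-p)^{n+1},
\end{equation*}
which yields the claimed formula. An equivalent route, which I might prefer since it sidesteps the index bookkeeping, is to use $\frac{1}{1+k} = \int_0^1 t^k\,dt$, swap the (finite) sum with the integral so that $\bbE(1/(1+X)) = \int_0^1 \bbE(t^X)\,dt = \int_0^1 (1-p+pt)^n\,dt$, and evaluate this by the substitution $u = 1-p+pt$, obtaining $\frac{1}{p}\int_{1-p}^1 u^n\,du = \frac{1-(1-p)^{n+1}}{(n+1)p}$ directly.

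Finally, the case $p=0$ must be treated separately, precisely because the general formula divides by $p$: when $p=0$ we have $X=0$ almost surely, so $\bbE(1/(1+X)) = 1$. I do not expect any genuine obstacle here; the only point requiring care is keeping the endpoints of the shifted sum (or the limits of the integral) correct, so that the ``missing'' term that must be subtracted is indeed $(1-p)^{n+1}$.
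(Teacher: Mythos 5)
Your argument is correct and takes essentially the same route as the paper's own proof: the identity $\frac{1}{1+k}\binom{n}{k}=\frac{1}{n+1}\binom{n+1}{k+1}$, the reindexing $j=k+1$, and the completion of the partial sum to a full binomial expansion minus the $(1-p)^{n+1}$ term, with $p=0$ handled trivially. The integral alternative you sketch is also valid, but your primary derivation coincides with the paper's.
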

\begin{proof}
    The claim holds trivially for~$p=0$. Consider the case that~$p>0$.
    \begin{align*}
        \bbE \pa{\frac{1}{1+X}} &= \sum_{k=0}^n \frac{1}{1+k} \Pr(X=k) \\
        &= \sum_{k=0}^n \frac{1}{1+k} \cdot \binom{n}{k} p^k (1-p)^{n-k} \\
        &= \frac{1}{(n+1)p} \sum_{k=0}^n \binom{n+1}{k+1} p^{k+1} (1-p)^{(n+1)-(k+1)} & \text{using } \binom{n}{k} = \binom{n+1}{k+1} \cdot \frac{k+1}{n+1}.
    \end{align*}
    By setting~$k' = k+1$, we can rewrite the sum
    \begin{equation*}
        \sum_{k=0}^n \binom{n+1}{k+1} p^{k+1} (1-p)^{(n+1)-(k+1)} = \sum_{k'=0}^{n+1} \binom{n+1}{k'} p^{k'} (1-p)^{(n+1)-k'} - (1-p)^{n+1} = 1-(1-p)^{n+1},
    \end{equation*}
    which concludes the proof of \Cref{claim:expectation_binomial_1}.
\end{proof}

\begin{claim} \label{claim:expectation_binomial_3}
    Let~$X \sim \calB(n,p)$. If $0 \leq p < 1$, then
    \begin{equation*}
        \bbE \pa{\frac{X}{2+n-X}}
        = p \cdot \frac{(n+1)(1-p)+p^{n+1}-1}{(n+1)(1-p)^2}.
    \end{equation*}
    Moreover, if $p = 1$, then $\bbE \pa{\frac{X}{2+n-X}} = \frac{n}{2}$.
\end{claim}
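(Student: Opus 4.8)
The case $p=1$ is immediate: then $X=n$ almost surely, so $X/(2+n-X)=n/2$. So assume $0\leq p<1$. The starting point is the algebraic identity
\[
\frac{X}{2+n-X}=\frac{n+2}{2+n-X}-1,
\]
which gives $\bbE\pa{\frac{X}{2+n-X}}=(n+2)\,\bbE\pa{\frac{1}{2+n-X}}-1$. Since $X\sim\calB(n,p)$ implies $Y:=n-X\sim\calB(n,q)$ with $q:=1-p\in(0,1]$, it suffices to compute $\bbE\pa{\frac{1}{2+Y}}$ for $Y\sim\calB(n,q)$ and then substitute back $q=1-p$ at the end.

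To evaluate $\bbE\pa{\frac{1}{2+Y}}=\sum_{k=0}^n\frac{1}{k+2}\binom{n}{k}q^k(1-q)^{n-k}$, I would mirror the trick in the proof of \Cref{claim:expectation_binomial_1}: use the identity $\frac{1}{k+2}\binom{n}{k}=\frac{k+1}{(n+1)(n+2)}\binom{n+2}{k+2}$ (checked directly from the factorial formulas), pull out the constant $\frac{1}{(n+1)(n+2)q^2}$, and reindex with $m=k+2$. Extending the resulting sum down to $m=0$ and $m=1$ and subtracting those two boundary terms (which contribute $-(1-q)^{n+2}$ and $0$ respectively), the sum becomes $\sum_{m=0}^{n+2}(m-1)\binom{n+2}{m}q^m(1-q)^{n+2-m}+(1-q)^{n+2}=\bbE(W)-1+(1-q)^{n+2}$ where $W\sim\calB(n+2,q)$, hence equals $(n+2)q-1+(1-q)^{n+2}$. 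This yields
\[
\bbE\pa{\frac{1}{2+Y}}=\frac{(n+2)q-1+(1-q)^{n+2}}{(n+1)(n+2)\,q^2}.
\]

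Finally I would plug this into $\bbE\pa{\frac{X}{2+n-X}}=(n+2)\,\bbE\pa{\frac{1}{2+Y}}-1$, substitute $q=1-p$ and $1-q=p$, put everything over the common denominator $(n+1)(1-p)^2$, and simplify the numerator: the combination $(n+2)(1-p)-(n+1)(1-p)^2$ factors as $(1-p)\bigl(1+(n+1)p\bigr)$, and after adding $p^{n+2}-1$ and collecting, the numerator collapses to $p\bigl((n+1)(1-p)+p^{n+1}-1\bigr)$, which is exactly the claimed expression. (An alternative to the reindexing step is to write $Z=Y+B$ with $Z\sim\calB(n+1,q)$ and $B$ an independent Bernoulli$(q)$, and apply \Cref{claim:expectation_binomial_1} to both $Z$ and $Y$ to solve for $\bbE\pa{\frac{1}{2+Y}}$; this also requires $q>0$, which holds since $p<1$.) The only real obstacle is bookkeeping: getting the boundary terms in the reindexed sum right and carrying out the final algebraic collapse correctly; none of it is conceptually hard, and a sanity check at $n=1$ (where both sides equal $p/2$) guards against slips.
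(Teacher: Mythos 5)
Your proof is correct, and it takes a genuinely different (if closely related) route from the paper's. The paper works directly on $\bbE\pa{X/(2+n-X)}$: after the reflection $k \mapsto n-k$ (your $Y=n-X$), it keeps the factor $k$ in the numerator and telescopes through two index shifts and the identities $\binom{n}{k-1}=\binom{n}{k}\cdot\frac{k}{n-k+1}$ and $\binom{n}{k}=\binom{n+1}{k+1}\cdot\frac{k+1}{n+1}$, finishing with the mean and the binomial theorem for $\calB(n+1,q)$. You instead open with the decomposition $\frac{X}{2+n-X}=\frac{n+2}{2+n-X}-1$, which reduces the whole computation to the pure reciprocal moment $\bbE\pa{1/(2+Y)}$; a single lift $\frac{1}{k+2}\binom{n}{k}=\frac{k+1}{(n+1)(n+2)}\binom{n+2}{k+2}$ then finishes via $\calB(n+2,q)$. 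Your version buys a cleaner conceptual reduction --- and your parenthetical alternative via $Z=Y+B$ makes \Cref{claim:expectation_binomial_3} essentially a corollary of \Cref{claim:expectation_binomial_1} applied twice --- at the cost of one extra round of algebra at the end; that algebra is right, since $(n+2)(1-p)-(n+1)(1-p)^2-1+p^{n+2}$ does collapse to $p\pa{(n+1)(1-p)+p^{n+1}-1}$. Both arguments invoke $q=1-p>0$ at the same point and treat $p=1$ separately, so the hypotheses are used identically.
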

\begin{proof}
    The claim holds trivially for~$p=1$. Consider the case that~$p<1$.
    Let $q = 1-p > 0$. We have
    \begin{align*}
        &\bbE \pa{\frac{X}{2+n-X}} \\
        &= \sum_{k=0}^n ~ \frac{k}{n-k+2} \Pr(X=k) = \sum_{k=1}^n ~ \frac{k}{n-k+2} \cdot \binom{n}{k} p^k (1-p)^{n-k} \\
        &= \sum_{k=0}^{n-1} ~ \frac{n-k}{k+2} \cdot \binom{n}{k} q^k (1-q)^{n-k} & k \mapsto n-k,~ q =1-p\\
        &= \sum_{k=1}^n ~ \frac{n-k+1}{k+1} \cdot \binom{n}{k-1} q^{k-1} (1-q)^{n-k+1} & k \mapsto k-1\\
        &= \sum_{k=1}^n ~ \frac{k}{k+1} \cdot \binom{n}{k} q^{k-1} (1-q)^{n-k+1} & \text{using } \binom{n}{k-1} = \binom{n}{k} \cdot \frac{k}{n-k+1}\\
        &= \sum_{k=1}^n ~ \frac{k}{n+1} \cdot \binom{n+1}{k+1} q^{k-1} (1-q)^{n-k+1} & \text{using } \binom{n}{k} = \binom{n+1}{k+1} \cdot \frac{k+1}{n+1}\\
        &= \sum_{k=2}^{n+1} ~ \frac{k-1}{n+1} \cdot \binom{n+1}{k} q^{k-2} (1-q)^{n-k+2} & k \mapsto k-1\\
        &= \frac{1-q}{q^2 (n+1)} \cdot \sum_{k=1}^{n+1} (k-1) \binom{n+1}{k} q^{k} (1-q)^{(n+1)-k}. &
    \end{align*}
    By expectation of the binomial distribution,
    \begin{equation*}
        \sum_{k=1}^{n+1} k \binom{n+1}{k} q^{k} (1-q)^{(n+1)-k} = (n+1)q.
    \end{equation*}
    Moreover, by the binomial theorem,
    \begin{equation*}
        \sum_{k=1}^{n+1} \binom{n+1}{k} q^{k} (1-q)^{(n+1)-k} = 1-(1-q)^{n+1}.
    \end{equation*}
    Putting every equation together, we obtain
    \begin{equation*}
        \bbE \pa{\frac{X}{2+n-X}} = (1-q) \cdot \frac{(n+1)q+(1-q)^{n+1}-1}{(n+1)q^2},
    \end{equation*}
    which concludes the proof of \Cref{claim:expectation_binomial_3}.
\end{proof}

\section{Analysis of the Company Game}

\subsection{Preliminaries}
Following classical notations from game theory, we define, for $n=2$ players:
\begin{center}
\begin{tabular}{crl}
    {\bf (Reward)} & $ R(\gamma,s,c,p,a) ~~=$ & $\payoff{1}{1}$ \\
    {\bf (Sucker)} & $ S(\gamma,s,c,p,a) ~~=$ & $\payoff{1}{0}$ \\
    {\bf (Temptation)} & $ T(\gamma,s,c,p,a) ~~=$ & $\payoff{0}{1}$ \\
    {\bf (Punishment)} & $ P(\gamma,s,c,p,a) ~~=$ & $\payoff{0}{0}$ \\
\end{tabular}
\end{center}
For simplicity, we do not mention~$(\gamma,s,c,p,a)$ when there is no risk of confusion.

The following result is well-known in game theory folklore. However, we provide a proof here for the sake of completeness. 
\begin{theorem} [Game of Chicken] \label{thm:chichen_game_equilibrium_payoff}
    If $n=2$ and~$T>R>S>P$, then there is a unique ESS, that satisfies
    \begin{equation} \label{eq:chicken_nash_payoff}
        \pi_\star = \frac{ST-RP}{S+T-R-P}.
    \end{equation}
\end{theorem}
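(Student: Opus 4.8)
The plan is to characterize the ESS directly from the definition specialized to two pure strategies and then compute the mixed-equilibrium payoff. Since a strategy is just a probability $p$ of playing producer, and the payoff is bilinear, the condition $T>R>S>P$ places us in the classic anti-coordination (``Game of Chicken'') regime: playing producer is best when the opponent scrounges ($S>P$) and playing scrounger is best when the opponent produces ($T>R$), so neither pure profile is an equilibrium and we expect a unique interior ESS.

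First I would write $\payoff{q}{p} = q\,p\,R + q\,(1-p)\,S + (1-q)\,p\,T + (1-q)(1-p)\,P$, and observe that the difference $\payoff{1}{p}-\payoff{0}{p} = p(R-T) + (1-p)(S-P)$ is affine and strictly decreasing in $p$ (because $R-T<0$ and $S-P>0$), positive at $p=0$ and negative at $p=1$. Hence there is a unique $p_\star\in(0,1)$ with $\payoff{1}{p_\star}=\payoff{0}{p_\star}$, namely $p_\star = \frac{S-P}{S-P+T-R}$, and for $q<p_\star$ we have $\payoff{1}{q}>\payoff{0}{q}$ while for $q>p_\star$ we have $\payoff{1}{q}<\payoff{0}{q}$. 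This is exactly the hypothesis of \Cref{lem:ESS_sufficient}, which immediately yields that $p_\star$ is the unique ESS.

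Next I would compute $\pi_\star = \payoff{p_\star}{p_\star}$. Since $\payoff{1}{p_\star}=\payoff{0}{p_\star}$, we have $\pi_\star = \payoff{0}{p_\star} = p_\star T + (1-p_\star) P$. Substituting $p_\star = \frac{S-P}{S-P+T-R}$ and simplifying the resulting fraction over the common denominator $S+T-R-P$ gives the numerator $(S-P)T + (T-R)P = ST - RP$, hence $\pi_\star = \frac{ST-RP}{S+T-R-P}$, as claimed. (One should note $S+T-R-P = (S-P)+(T-R)>0$, so the expression is well-defined.)

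I do not anticipate a genuine obstacle here: the statement is folklore precisely because the bilinearity makes everything a short affine computation, and \Cref{lem:ESS_sufficient} does the conceptual work of upgrading the first-order condition to a uniqueness-of-ESS statement. The only point requiring mild care is bookkeeping in the final algebraic simplification and confirming the sign of the denominator so that $p_\star$ is indeed a valid probability in $(0,1)$; both follow directly from $T>R>S>P$.
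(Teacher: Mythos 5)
Your proposal is correct and follows essentially the same route as the paper: identify the indifference point $p_\star = \frac{S-P}{S+T-R-P}$, use the strict monotonicity of $\payoff{1}{p}-\payoff{0}{p}$ in $p$ (equivalently, the paper's comparison of slopes $R-S < T-P$) to verify the hypotheses of \Cref{lem:ESS_sufficient}, and then compute $\pi_\star$ by direct substitution. The algebraic simplification to $\frac{ST-RP}{S+T-R-P}$ and the sign check on the denominator are both correct.
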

\begin{proof}
    We have, by definition $\payoff{1}{p} = p \, R + (1-p) \, S$ and $\payoff{0}{p} = p \, T + (1-p) \, P$. Note that
    \begin{equation} \label{eq:p_star_chicken}
        \payoff{1}{p} = \payoff{0}{p} \iff p = \frac{S-P}{S+T-R-P}.
    \end{equation}
    Define
    \begin{equation*}
        p_\star = \frac{S-P}{S+T-R-P} = \frac{1}{1+\frac{T-R}{S-P}}
    \end{equation*}
    with $T-R > 0$, $S - P > 0$, so $p_\star \in [0,1]$.
    Let
    \begin{equation}\label{eq:pi}
        \pi_\star = \payoff{1}{p_\star} = \payoff{0}{p_\star}.
    \end{equation}
    By assumption in the theorem, \[\frac{d}{dp} \payoff{1}{p} = R-S < T-P = \frac{d}{dp} \payoff{0}{p},\] and hence, by Eq.~\eqref{eq:pi}, 
    for every~$q < p_\star$, $\payoff{1}{q} > \payoff{0}{q}$,  and for every~$q > p_\star$, $\payoff{1}{q} < \payoff{0}{q}$.
    By \Cref{lem:ESS_sufficient}, this, together with Eq.~\eqref{eq:pi}, implies that $p_\star$ is a unique ESS.
    To conclude the proof of \Cref{thm:chichen_game_equilibrium_payoff}, we just check that $\pi_\star$ satisfies \Cref{eq:chicken_nash_payoff}.
\end{proof}

\subsection{There is no Reverse-Correlation phenomenon for $\phi : x \mapsto x$} \label{observation}

\begin{observation} \label{thm:no_braess_example}
    Consider the case that there are $n=2$ players and that $\phi : x \mapsto x$.
    Let $c\geq 0$, $s\in[1/2,1]$, $p,a\in [0,1]$, and set~$\gamma_0 = c/(p s (1-a))$.
    Then for every $\gamma \neq \gamma_0$, there is a unique ESS.
    Moreover, the payoff $\pi_\star$ and the total production $\Gamma_\star$ corresponding to the ESS are both strictly increasing functions of~$\gamma$.
\end{observation}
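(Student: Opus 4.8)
The plan is to exploit the linearity of $\phi$ to collapse the whole problem into a dominant-strategy analysis. First I would write out, for $n=2$, the four relevant payoff values $R=\payoff{1}{1}$, $S=\payoff{1}{0}$, $T=\payoff{0}{1}$, $P=\payoff{0}{0}$ directly from \Cref{eq:salary,eq:researcher_simple} with $\phi:x\mapsto x$. Since a player's production equals $\gamma$ (if a producer) or $a\gamma$ (if a scrounger) with probability $p$ and $0$ otherwise, a one-line computation gives $R=p\gamma-c$, $S=p\gamma\pa{s+(1-s)a}-c$, $T=p\gamma\pa{sa+1-s}$ and $P=pa\gamma$.

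The conceptual crux is the identity $R-T=S-P=ps(1-a)\gamma-c=ps(1-a)(\gamma-\gamma_0)$, where $\gamma_0=c/(ps(1-a))$. Intuitively, unilaterally switching a player from scrounger to producer increases only that player's own expected production, by $p\gamma(1-a)$, which enters the salary with weight $s$, while the opponent's production --- the only quantity multiplied by $1-s$ --- is untouched, and the cost increases by $c$; hence the net gain is the same whether or not the opponent produces. Writing $\payoff{1}{p'}=p'R+(1-p')S$ and $\payoff{0}{p'}=p'T+(1-p')P$ for an opponent who is a producer with probability $p'$, the identity yields
\[
\payoff{1}{p'}-\payoff{0}{p'}=p'(R-T)+(1-p')(S-P)=ps(1-a)(\gamma-\gamma_0),
\]
independent of $p'$. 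Thus for $\gamma>\gamma_0$ the pure strategy ``producer'' strictly dominates ``scrounger'' (uniformly in $p'$), for $\gamma<\gamma_0$ the reverse holds, and for $\gamma=\gamma_0$ the two pure strategies --- hence all strategies --- are payoff-equivalent against every opponent, so there is no strict ESS, which is exactly why this value is excluded from the statement.

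From strict dominance I would read off the unique ESS directly from its definition, along the lines of \Cref{lem:ESS_sufficient}: when $\payoff{1}{p'}>\payoff{0}{p'}$ for all $p'$, the profile $p_\star=1$ satisfies condition~(i) against every $q<1$ (because $\payoff{q}{1}$ is a convex combination giving positive weight to $\payoff{0}{1}<\payoff{1}{1}$), and no $p<1$ can be an ESS since $\payoff{p}{p}<\payoff{1}{p}$; the case $\gamma<\gamma_0$ is symmetric with $p_\star=0$. Substituting back, $\pi_\star(\gamma)=R=p\gamma-c$ and $\Gamma_\star(\gamma)=2p\gamma$ on $(\gamma_0,+\infty)$, while $\pi_\star(\gamma)=P=pa\gamma$ and $\Gamma_\star(\gamma)=2pa\gamma$ on $[0,\gamma_0)$. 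Each branch is increasing in $\gamma$, and at $\gamma_0$ both quantities jump \emph{upward} --- $\pi_\star$ by $p\gamma_0(1-a)-c=c(1-s)/s\ge 0$ and $\Gamma_\star$ by $2p\gamma_0(1-a)=2c/s\ge 0$ --- so $\pi_\star$ and $\Gamma_\star$ are increasing on all of $[0,+\infty)\setminus\{\gamma_0\}$, which in particular rules out a Reverse-Correlation phenomenon.

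No step here is genuinely hard; the single identity for the payoff gap does all the work, because being constant in the opponent's strategy it forbids any interior ESS. I expect the only fiddly part to be the bookkeeping at and around the threshold $\gamma_0$, together with degenerate parameter values (invoking $\gamma_0$ presupposes $ps(1-a)\neq 0$, i.e.\ $p>0$ and $a<1$; and when $a=0$ the lower branch $\pi_\star\equiv 0$ is only non-decreasing, but the upward jump at $\gamma_0$ still delivers the global monotonicity claim) --- a remark or two should dispatch these.
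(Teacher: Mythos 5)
Your proof is correct and follows essentially the same route as the paper's: compute $R,S,T,P$ explicitly for linear $\phi$, show one pure strategy strictly dominates on each side of $\gamma_0$ (your identity $R-T=S-P=ps(1-a)(\gamma-\gamma_0)$ is a slightly cleaner way to package the paper's pairwise comparisons), read off $\pi_\star$ and $\Gamma_\star$ on each branch, and verify the upward jump at $\gamma_0$. Your closing remark about degenerate parameters ($p=0$ or $a=1$, and the merely non-decreasing lower branch when $a=0$) is a legitimate caveat that the paper's statement and proof gloss over.
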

\begin{proof}
Recall that in the Company game,
\begin{equation} \label{eq:general_payoff}
    \pi_1 = \phi(s q_1 + (1-s) q_2) - c_1.
\end{equation}


\begin{figure}[htbp]
    \centering
    \includegraphics[width=0.35\linewidth]{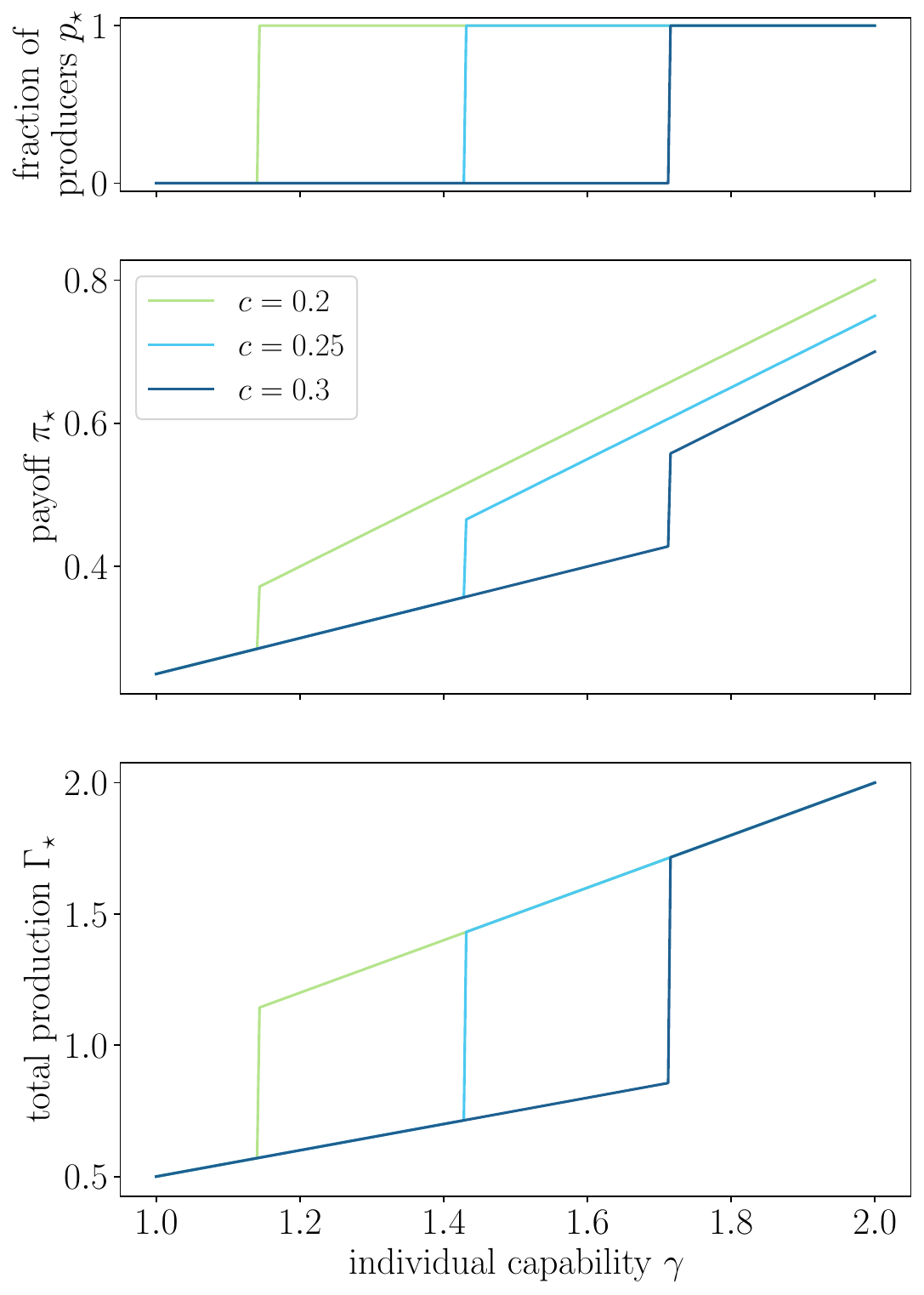}
    \caption{{\bf The Company game with linear $\phi$.}  
    Simulating the Company game with $\phi: x \mapsto x$,
    $n=2, a=p=\frac{1}{2}, s=0.7$, and $c=0.25$. The simulation reveals no Reverse-Correlation phenomenon.}
    \label{fig:company_no_braess}
\end{figure}

Consider the case that~$\phi : x \mapsto x$. For every $s\in [1/2,1]$,  
\Cref{eq:general_payoff} gives
\begin{equation*}
    \bbE(\pi_1) = s \bbE(q_1) + (1-s) \bbE(q_2) - c_1.
\end{equation*}
Therefore, 
\begin{align}
    R(\gamma,s,c,p,a) &= s \cdot(p \gamma) + (1-s) \cdot(p \gamma) - c = p \gamma - c, \label{eq:R0} \\
    S(\gamma,s,c,p,a) &= s \cdot(p \gamma) + (1-s) \cdot(p a \gamma) - c = p \gamma (s+a-s a) - c, \label{eq:S0} \\
    T(\gamma,s,c,p,a) &= s \cdot(p a \gamma) + (1-s) \cdot(p \gamma) = p \gamma (1-s+s a), \label{eq:T0}\\
    P(\gamma,s,c,p,a) &= s \cdot(p a \gamma) + (1-s) \cdot(p a \gamma) = p a \gamma. \label{eq:P0}
\end{align}
Recall that~$\gamma_0 = c/(p s (1-a))$.
\begin{itemize}
    \item If $\gamma < \gamma_0$, then $T>R$ and $P>S$, in which scrounger is a dominant strategy. Therefore, there is a unique ESS, and we have:
    \[\pi_{\star}=\Gamma_\star=P = pa\gamma.\] 
    In particular, these values are increasing in  $\gamma$. 
    \item If $\gamma > \gamma_0$, then $R>T$ and $S>P$, and hence producer is a dominant strategy. Therefore, there is a unique ESS, and 
     \[\pi_{\star}=R = p \gamma - c, \mbox{~~and~~}\Gamma_{\star}=p\gamma.\]
    In particular, both these values are increasing in $\gamma$.
    \item If $\gamma = \gamma_0$, then $R=T$ and $P=S$, which implies that no player can unilaterally change its payoff.
    Indeed, for every~$p,q \in [0,1]$,
    \begin{equation*}
        \payoff{p}{q} = pqR + (1-p)qT + p(1-q)S + (1-p)(1-q)P = qR + (1-q)S,
    \end{equation*}
    so for every~$p,p',q \in [0,1]$, $\payoff{p}{q} = \payoff{p'}{q}$.
    In this degenerate case, neither condition (i) nor (ii) in the definition of ESS can be satisfied, so there is no ESS. 
\end{itemize}
To conclude the proof of \Cref{thm:no_braess_example}, we only need to show that $\pi_\star$ and $\Gamma_\star$ do not decrease at the discontinuity point $\gamma = \gamma_0$.
Since $\gamma_0 \geq c/(p(1-a))$, we have 
\begin{equation*}
    \lim_{\epsilon \rightarrow 0^+} \Gamma_\star(\gamma_0-\epsilon) = \lim_{\epsilon \rightarrow 0^+} \pi_\star(\gamma_0-\epsilon) = ap\gamma_0 \leq p\gamma_0-c = \lim_{\epsilon \rightarrow 0^+} \pi_\star(\gamma_0+\epsilon) \leq \lim_{\epsilon \rightarrow 0+} \Gamma_\star(\gamma_0+\epsilon).
\end{equation*}
Thus, overall, the payoffs of players at equilibrium, $\pi_{\star}$, and the total production, $\Gamma_{\star}$, are both increasing in~$\gamma$.
\end{proof}

\subsection{The Reverse-Correlation phenomenon in the Company game}

In this section, we demonstrate the Reverse-Correlation phenomenon in the Company game for two utility functions. Specifically, we first prove that the Reverse-Correlation phenomenon can occur when assuming 
the utility function $\phi : x \mapsto 1-\exp(-2 \, x)$.
Then, in  \Cref{sec:alternative_company_RC} we provide simulations that demonstrate the Reverse-Correlation phenomenon assuming the utility function $\phi : x \mapsto \min(1,x)$.

\begin{theorem} \label{thm:company_braess}
    Consider the Company game with $n=2$, $\phi : x \mapsto 1-\exp(-2 \, x)$, $p=1/2$, $a=1/2$.
    For every~$s < 1$, there exist $c_0>0$, and $\gamma_{\min} , \gamma_{\max} > 1$, for which there is a unique ESS such that $\pi_\star$ is decreasing in $\gamma$ on the interval~$[\gamma_{\min},\gamma_{\max}]$.
\end{theorem}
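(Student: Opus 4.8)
The plan is to instantiate the $n=2$ payoffs, recognize a parameter regime in which \Cref{thm:chichen_game_equilibrium_payoff} applies, and read off the non-monotonicity of $\pi_\star$ from explicit formulas. Substituting $p=a=\tfrac12$ and $\phi(x)=1-e^{-2x}$ (so $\phi(0)=0$) into the payoff definitions and averaging over the four equally likely production profiles gives $R=R_0(\gamma)-c$, $S=S_0(\gamma)-c$, $T=T_0(\gamma)$, $P=P_0(\gamma)$, where each of $R_0,S_0,T_0,P_0$ equals $\tfrac34$ minus $\tfrac14$ times a sum of three exponentials $e^{-r\gamma}$. The computational heart is that the differences one needs telescope into products:
\begin{align*}
    R_0-T_0 &= \tfrac14\,(1-e^{-s\gamma})\,(e^{-s\gamma}+e^{-(2-s)\gamma}) =: \delta_{TR}(\gamma), \\
    S_0-P_0 &= \tfrac14\,(1-e^{-s\gamma})\,(e^{-s\gamma}+e^{-\gamma}) =: \delta_{SP}(\gamma), \\
    \delta_{SP}(\gamma)-\delta_{TR}(\gamma) &= \tfrac14\,(1-e^{-s\gamma})\,e^{-\gamma}\,(1-e^{-(1-s)\gamma}), \\
    R_0-S_0 &= \tfrac14\,(1-e^{-(1-s)\gamma})\,(e^{-(1-s)\gamma}+e^{-(1+s)\gamma}), \\
    T_0-P_0 &= \tfrac14\,(1-e^{-(1-s)\gamma})\,(e^{-(1-s)\gamma}+e^{-\gamma}),
\end{align*}
all strictly positive for $\gamma>0$, the last three using $s<1$. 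Thus $R>S$ always, and for fixed $\gamma$ the game is: producer-dominant if $c<\delta_{TR}(\gamma)$ (unique ESS, $\pi_\star=R_0(\gamma)-c$); a Game of Chicken $T>R>S>P$ if $\delta_{TR}(\gamma)<c<\delta_{SP}(\gamma)$ (unique ESS, $\pi_\star=\tfrac{ST-RP}{S+T-R-P}$, by \Cref{thm:chichen_game_equilibrium_payoff}); scrounger-dominant if $c>\delta_{SP}(\gamma)$ (unique ESS, $\pi_\star=P_0(\gamma)$). In particular the Chicken payoff formula tends to $R$ as $T\to R$ and to $P$ as $S\to P$.

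Next I would pin down a Chicken window. Since $\delta_{TR},\delta_{SP}\to0$ as $\gamma\to\infty$ with eventually negative derivative (each has leading term $-\tfrac s4 e^{-s\gamma}$), fix $\Gamma_0>1$ beyond which both strictly decrease, and pick $c_0>0$ small enough that $\delta_{TR}(\gamma)=c_0$ and $\delta_{SP}(\gamma)=c_0$ each have a solution past $\Gamma_0$, say $\gamma^*$ and $\gamma^{**}$. Because $\delta_{TR}<\delta_{SP}$, we get $\gamma^*<\gamma^{**}$ and $\delta_{TR}<c_0<\delta_{SP}$ on $(\gamma^*,\gamma^{**})$, so the Game of Chicken holds throughout this interval --- hence a unique ESS, with $\pi_\star$ given by the Chicken formula --- and $\gamma^*>\Gamma_0>1$. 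On the closed interval $[\gamma^*,\gamma^{**}]$ that formula is in fact smooth, since $S+T-R-P=(S-P)+(T-R)=(\delta_{SP}-c_0)+(c_0-\delta_{TR})$ remains positive there; its endpoint values are $\pi_\star(\gamma^*)=R_0(\gamma^*)-c_0=T_0(\gamma^*)$ (as $\delta_{TR}=R_0-T_0$) and $\pi_\star(\gamma^{**})=P_0(\gamma^{**})$. Finally, a routine estimate --- bounding $\delta_{SP}(\gamma^*)-\delta_{SP}(\gamma^{**})=\delta_{SP}(\gamma^*)-\delta_{TR}(\gamma^*)\le\tfrac14 e^{-\gamma^*}$ from above and $|\delta_{SP}'|$ from below by a positive multiple of $e^{-s\gamma}$ --- shows $\gamma^{**}-\gamma^*\to0$ as $c_0\to0$.

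The key step is that $\pi_\star(\gamma^*)>\pi_\star(\gamma^{**})$, i.e.\ $T_0(\gamma^*)>P_0(\gamma^{**})$, for $c_0$ small. Write this as $\big(T_0(\gamma^*)-P_0(\gamma^*)\big)-\big(P_0(\gamma^{**})-P_0(\gamma^*)\big)$. By the fifth identity above, $T_0(\gamma^*)-P_0(\gamma^*)=\tfrac14(1-e^{-(1-s)\gamma^*})(e^{-(1-s)\gamma^*}+e^{-\gamma^*})\ge\tfrac18 e^{-(1-s)\gamma^*}$ once $\gamma^*\ge\tfrac{\ln2}{1-s}$; and since $P_0'(\gamma)=\tfrac14\big(se^{-s\gamma}+(1-s)e^{-(1-s)\gamma}+e^{-\gamma}\big)\le\tfrac12 e^{-(1-s)\gamma}$ (using $s\ge1-s$), we have $P_0(\gamma^{**})-P_0(\gamma^*)\le\tfrac12 e^{-(1-s)\gamma^*}(\gamma^{**}-\gamma^*)$. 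Hence $T_0(\gamma^*)-P_0(\gamma^{**})\ge e^{-(1-s)\gamma^*}\big(\tfrac18-\tfrac12(\gamma^{**}-\gamma^*)\big)>0$ whenever $\gamma^{**}-\gamma^*<\tfrac14$, which holds for $c_0$ small. Since $\pi_\star$ is continuous on $[\gamma^*,\gamma^{**}]$, smooth on the interior, and $\pi_\star(\gamma^*)>\pi_\star(\gamma^{**})$, the mean value theorem yields $\gamma_0\in(\gamma^*,\gamma^{**})$ with $\pi_\star'(\gamma_0)<0$; by continuity of $\pi_\star'$ there is an interval $[\gamma_{\min},\gamma_{\max}]\subset(\gamma^*,\gamma^{**})$, with $\gamma_{\min}>1$, on which $\pi_\star$ strictly decreases --- where the ESS is unique by \Cref{thm:chichen_game_equilibrium_payoff} --- which is the theorem.

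I expect the main obstacle to be the orchestration rather than any single deep point: carefully verifying the five telescoping identities (short but error-prone algebra), delimiting the Chicken window so that $\pi_\star$ is continuous across it, and controlling $\gamma^{**}-\gamma^*$ finely enough that the $\tfrac14(1-e^{-(1-s)\gamma})e^{-(1-s)\gamma}$-sized gap $T_0-P_0$ dominates the increase of $P_0$. That comparison is exactly where $s<1$ is essential --- it makes both $\delta_{SP}>\delta_{TR}$ (so a Chicken window exists) and $T_0>P_0$ (so $\pi_\star$ must drop).
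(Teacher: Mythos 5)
Your setup and regime analysis are correct: the five factored identities for $R_0-T_0$, $S_0-P_0$, their difference, $R_0-S_0$ and $T_0-P_0$ all check out against \Cref{eq:R,eq:S,eq:T,eq:P}, and the classification into producer-dominant / Chicken / scrounger-dominant according to the position of $c$ relative to $\delta_{TR}$ and $\delta_{SP}$ is sound. Your route is genuinely different from the paper's. The paper fixes a single $\gamma_0>\max(1,\ln(1+\sqrt{2})/s)$, wedges $c_0$ strictly between $R-T$ and $S-P$ at that point, gets a Chicken window by continuity, and then performs an exact simplification of $\frac{ST-RP}{S+T-R-P}$ to the closed form $1-c_0\,e^{s\gamma}\coth(s\gamma/2)$, whose derivative is negative exactly when $\sinh(s\gamma)>1$ --- so the payoff is shown to decrease on the \emph{entire} Chicken window, with an explicit threshold. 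You instead shrink the Chicken window by sending $c_0\to 0$, evaluate $\pi_\star$ only at the two endpoints ($T_0(\gamma^*)$ and $P_0(\gamma^{**})$, both correctly identified as the limits of the Chicken formula), and invoke the mean value theorem. Your argument is more robust --- it does not depend on the somewhat miraculous cancellation behind \Cref{claim:obscure_computations} --- but it delivers strictly less: no formula for $\pi_\star$, no explicit decreasing interval, and it forces $c_0$ to be small, whereas the paper's $c_0$ is any value in a fixed open interval.

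There is one step that does not close as written: the claim that $\gamma^{**}-\gamma^*\to 0$ as $c_0\to 0$. Your estimate bounds $\delta_{SP}(\gamma^*)-\delta_{SP}(\gamma^{**})\le\frac14 e^{-\gamma^*}$ from above and $|\delta_{SP}'|$ from below by $\frac{s}{8}e^{-s\gamma}$; but the mean-value point lies in $[\gamma^*,\gamma^{**}]$, so the usable lower bound is $\frac{s}{8}e^{-s\gamma^{**}}$, giving only $(\gamma^{**}-\gamma^*)\le\frac{2}{s}e^{s(\gamma^{**}-\gamma^*)}e^{-(1-s)\gamma^*}$. An inequality of the form $D\,e^{-sD}\le\varepsilon$ is also satisfied by \emph{large} $D$, so smallness of the right-hand side does not by itself force $D=\gamma^{**}-\gamma^*$ to be small. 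The gap is fixable with a short bootstrap: from $\frac18 e^{-s\gamma}\le\delta_{TR}(\gamma),\delta_{SP}(\gamma)\le\frac12 e^{-s\gamma}$ (valid once $e^{-s\gamma}\le 1/2$) one gets the a priori bounds $\gamma^*\ge\frac1s\ln\frac{1}{8c_0}$ and $\gamma^{**}\le\frac1s\ln\frac{1}{2c_0}$, hence $\gamma^{**}-\gamma^*\le\frac{2\ln 2}{s}$; feeding this constant back into your estimate yields $\gamma^{**}-\gamma^*\le\frac{8}{s}e^{-(1-s)\gamma^*}\to 0$. With that repair (or, alternatively, by comparing $\frac34-T_0(\gamma^*)=O(e^{-\mu\gamma^*})$ with $\frac34-P_0(\gamma^{**})\ge\frac14 e^{-(1-s)\gamma^{**}}$ directly, using $\mu=\min(s,2(1-s))>1-s$), your proof is complete.
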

\begin{proof}
Let
\begin{equation*}
    \gamma_i = \begin{cases}
        \gamma & \if \text{Player $i$ is a producer}, \\
        \gamma/2 & \text{otherwise}.
    \end{cases}
\end{equation*}
By definition, if player $i$ succeeds in producing a product (which happens with probability $p=1/2$) then the quality of its product is $\gamma_i$. 
Hence, 
\Cref{eq:general_payoff} gives 
\begin{equation*}
    \bbE(\pi_1) = \frac{1}{4} \pa{\phi(s \gamma_1 + (1-s) \gamma_2) + \phi(s \gamma_1) + \phi((1-s) \gamma_2)} - c_1.
\end{equation*}
 Plugging in $\phi(x) = 1-\exp(-2x)$, we obtain
\begin{align}
    R(\gamma,s,c) &= \frac{1}{4} \pa{ 3 - e^{-2\gamma} - e^{-2s \gamma} - e^{-2(1-s)\gamma} } - c, \label{eq:R} \\
    S(\gamma,s,c) &= \frac{1}{4} \pa{ 3 - e^{-(1+s)\gamma} - e^{-2s \gamma} - e^{-(1-s)\gamma} } - c, \label{eq:S} \\
    T(\gamma,s,c) &= \frac{1}{4} \pa{ 3 - e^{-(2-s)\gamma} - e^{-s \gamma} - e^{-2(1-s)\gamma} }, \label{eq:T}\\
    P(\gamma,s,c) &= \frac{1}{4} \pa{ 3 - e^{-\gamma} - e^{-s \gamma} - e^{-(1-s)\gamma} }. \label{eq:P}
\end{align}
Note that $R, S, T$ and $P$ are all increasing functions of $\gamma$.
Consequently, if the strategies of Players 1 and 2 remain unchanged, then $\bbE(\pi_1)$ is also increasing in $\gamma$. However, we will show that at equilibrium, the tendency of the player to be a scrounger increases in $\gamma$ to such an extent that ultimately reduces $\bbE(\pi_1)$. 

The next step towards proving \Cref{thm:company_braess} is to show that for some specific values of~$s$ and $c$, the Company game is in fact a game of chicken.
\begin{lemma} \label{lem:company_is_chicken}
    For every~$s < 1$, there exists a value $c_0 = c_0(s)$ and an interval~$[\gamma_{\min},\gamma_{\max}]$ such that for every~$\gamma \in [\gamma_{\min},\gamma_{\max}]$, 
    \begin{equation} \label{eq:chicken_structure}
        T(\gamma,s,c_0) > R(\gamma,s,c_0) > S(\gamma,s,c_0) > P(\gamma,s,c_0).
    \end{equation}
    In particular, by \Cref{thm:chichen_game_equilibrium_payoff}, this implies that for every~$\gamma \in [\gamma_{\min},\gamma_{\max}]$, there is a unique ESS satisfying
    \begin{equation*}
        \pi_\star(\gamma,s,c_0) = \frac{ST-RP}{S+T-R-P}.
    \end{equation*}
\end{lemma}

Before proving \Cref{lem:company_is_chicken}, we need two preliminary technical results.
The next claim implies, in particular, that if $c=0$ then a producer is a dominant strategy.
\begin{claim} \label{claim:producers_are_better_friends}
    For all~$\gamma,s$ such that $s < 1$, $R(\gamma,s,0) > S(\gamma,s,0),T(\gamma,s,0)$, and $S(\gamma,s,0),T(\gamma,s,0) > P(\gamma,s,0)$.
\end{claim}
\begin{proof}
    By pairwise comparison of the terms in \Cref{eq:R,eq:S,eq:T,eq:P}.
\end{proof}

The next claim implies that $T-P > R-S$, or in other words, that scroungers lose more than producers when the other player switches from producer to scrounger.
\begin{claim} \label{claim:towards_chicken}
    For all~$\gamma,s,c$ such that $s < 1$,
    \begin{equation*}
        S(\gamma,s,c)-P(\gamma,s,c) > R(\gamma,s,c)-T(\gamma,s,c).
    \end{equation*}
\end{claim}
\begin{proof}
    We have
    \begin{equation*}
        4(R-S) = \pa{ e^{-(1+s)\gamma} - e^{-2\gamma} } + \pa{ e^{-(1-s)\gamma} - e^{-2(1-s)\gamma} },
    \end{equation*}
    and
    \begin{equation*}
        4(T-P) = \pa{ e^{-\gamma} - e^{-(2-s)\gamma} } + \pa{ e^{-(1-s)\gamma} - e^{-2(1-s)\gamma} }.
    \end{equation*}
    Factoring by $e^{-\gamma}$, this gives
    \begin{equation} \label{eq:save_for_later}
    	(T-P) - (R-S) = \frac{e^{-\gamma}}{4} \pa{ 1 + e^{-\gamma} - e^{-s\gamma} - e^{-(1-s)\gamma} }.
    \end{equation}
    Factoring again by $e^{-\gamma}$, and using the convexity of the function  $e^x+e^{\gamma-x}$, we obtain
    \begin{equation*}
    	(T-P) - (R-S) = \frac{e^{-2\gamma}}{4} \pa{ e^{\gamma} + 1 - \pa{ e^{s \gamma} + e^{(1-s) \gamma} } } > 0,
    \end{equation*}
    which concludes the proof of \Cref{claim:towards_chicken}.
\end{proof}

\begin{proof} [Proof of \Cref{lem:company_is_chicken}]
    Let us fix~$\gamma_0 > \max(1,\ln(1+\sqrt{2}) / s)$. By \Cref{claim:producers_are_better_friends,claim:towards_chicken},  we can take $c_0 = c_0(s)$ such that
    \begin{equation*}
        0 < R(\gamma_0,s,0)-T(\gamma_0,s,0) < c_0 < S(\gamma_0,s,0)-P(\gamma_0,s,0).
    \end{equation*}
    As a consequence,
    \begin{equation*}
        R(\gamma_0,s,c_0) = R(\gamma_0,s,0) - c_0 < T(\gamma_0,s,0)=T(\gamma_0,s,c_0),
    \end{equation*}
    and
    \begin{equation*}
        S(\gamma_0,s,c_0) = S(\gamma_0,s,0) - c_0 > P(\gamma_0,s,0)=P(\gamma_0,s,c_0).
    \end{equation*}
    Finally, by \Cref{claim:producers_are_better_friends}, we have that
    \begin{equation*}
        T(\gamma_0,s,c_0) > R(\gamma_0,s,c_0) > S(\gamma_0,s,c_0) > P(\gamma_0,s,c_0).
    \end{equation*}
    By continuity, there exist $\gamma_{\min},\gamma_{\max}$ such that $\max(1,\ln(1+\sqrt{2}) / s) < \gamma_{\min} < \gamma_0 < \gamma_{\max}$ and for every~$\gamma \in [\gamma_{\min},\gamma_{\max}]$, \Cref{eq:chicken_structure} holds,
    which concludes the proof of \Cref{lem:company_is_chicken}.
\end{proof}

\begin{claim} \label{claim:obscure_computations}
   For every~$\gamma \in [\gamma_{\min},\gamma_{\max}]$,
   \begin{equation*}
       \pi_\star(\gamma,s,c_0) = 1 - c_0 \, e^{s \gamma} \cdot \pa{ \frac{e^{s \gamma}+1}{e^{s \gamma}-1}} = 1 - c_0 \, e^{s \gamma} \coth\pa{\frac{s \gamma}{2}}.
   \end{equation*}
\end{claim}
\begin{proof}
    We start from the expression of~$\pi_\star(\gamma,s,c_0)$ given by \Cref{lem:company_is_chicken}.
    First, we compute $ST-RP$ using \Cref{eq:R,eq:S,eq:T,eq:P}. For that purpose, we expand $ST$ and $RP$ separately, and then simplify.
    We have (each line corresponds to one term of $S$ multiplied by all the terms of $T$):
\begin{align*}
	16 \cdot ST &= \pa{3 - e^{-(1+s)\gamma} - e^{-2s \gamma} - e^{-(1-s)\gamma} - 4c_0} \cdot \pa{3 - e^{-(2-s)\gamma} - e^{-s \gamma} - e^{-2(1-s)\gamma}} \\
	&= 9 - 3e^{-(2-s)\gamma} - 3e^{-s\gamma} - 3e^{-2(1-s)\gamma} \\
	&- 3e^{-(1+s)\gamma} + e^{-3\gamma} + e^{-(1+2s)\gamma} + e^{-(3-s)\gamma} \\
	&- 3e^{-2s\gamma} + e^{-(2+s)\gamma} + e^{-3s\gamma} + e^{-2\gamma} \\
	&- 3e^{-(1-s)\gamma} + e^{-(3-2s)\gamma} + e^{-\gamma} + e^{-3(1-s)\gamma} \\
	&-12c_0 + 4c_0e^{-(2-s)\gamma} + 4c_0e^{-s\gamma} + 4c_0 e^{-2(1-s)\gamma}.
\end{align*}
Similarly,
\begin{align*}
	16 \cdot RP &= \pa{ 3 - e^{-2\gamma} - e^{-2s \gamma} - e^{-2(1-s)\gamma} - 4c_0} \cdot \pa{ 3 - e^{-\gamma} - e^{-s \gamma} - e^{-(1-s)\gamma} } \\
	&= 9 - 3e^{-\gamma} - 3e^{-s\gamma} - 3e^{-(1-s)\gamma} \\
	&- 3e^{-2\gamma} + e^{-3\gamma} + e^{-(2+s)\gamma} + e^{-(3-s)\gamma} \\
	&- 3e^{-2s\gamma} + e^{-(1+2s)\gamma} + e^{-3s\gamma} + e^{-(1+s)\gamma} \\
	&- 3e^{-2(1-s)\gamma} + e^{-(3-2s)\gamma} + e^{-(2-s)\gamma} + e^{-3(1-s)\gamma} \\
	&-12c_0 + 4c_0e^{-\gamma} + 4c_0e^{-s\gamma} + 4c_0 e^{-(1-s)\gamma}.
\end{align*}
When computing the difference, many terms disappear, leaving us with:
\begin{equation*}
	16 \cdot (ST-RP) = 4 \pa{ e^{-\gamma} + e^{-2\gamma} - e^{-(1+s)\gamma} - e^{-(2-s)\gamma} - c_0 \pa{ e^{-\gamma} + e^{-(1-s)\gamma} - e^{-(2-s)\gamma} - e^{-2(1-s)\gamma} } }.
\end{equation*}
Factoring the right hand side by $e^{-\gamma}$, we obtain
\begin{equation*}
	ST-RP = \frac{e^{-\gamma}}{4} \pa{ 1 + e^{-\gamma} - e^{-s\gamma} - e^{-(1-s)\gamma} - c_0 \pa{ 1 + e^{s\gamma} - e^{-(1-s)\gamma} - e^{-(1-2s)\gamma} } }.
\end{equation*}
Using \Cref{eq:save_for_later}, we obtain
\begin{equation}
	\frac{ST-RP}{S+T-R-P} = 1-c_0 \cdot \frac{1 + e^{s\gamma} - e^{-(1-s)\gamma} - e^{-(1-2s)\gamma}}{1 + e^{-\gamma} - e^{-s\gamma} - e^{-(1-s)\gamma} }.
\end{equation}
Factoring the numerator of the fraction by $e^{s\gamma}$ and rearranging, we get
\begin{equation*}
	1-c_0 e^{s\gamma} \cdot \frac{1 - e^{-(1-s)\gamma} + \pa{ e^{-s\gamma} - e^{-\gamma} }}{1 - e^{-(1-s)\gamma} - \pa{ e^{-s\gamma} - e^{-\gamma} }}.
\end{equation*}
Dividing both the numerator and denominator by $\pa{ e^{-s\gamma} - e^{-\gamma} }$, and using the fact that
\begin{equation*}
	\frac{1-e^{-(1-s)\gamma}}{e^{-s\gamma} - e^{-\gamma}} = e^{s \gamma} \cdot \pa{ \frac{e^{-s\gamma} - e^{-\gamma}}{e^{-s\gamma} - e^{-\gamma}} } = e^{s \gamma},
\end{equation*}
we finally get
\begin{equation*}
	\frac{ST-RP}{S+T-R-P} = 1-c_0 e^{s\gamma} \cdot \frac{e^{s \gamma}+1}{e^{s \gamma}-1},
\end{equation*}
which concludes the proof of \Cref{claim:obscure_computations}.
\end{proof}

\begin{claim} \label{claim:equilibrium_payoff_derivative}
   For every~$\gamma \in [\gamma_{\min},\gamma_{\max}]$,
   \begin{equation*}
       \frac{\partial}{\partial \gamma} \pi_\star(\gamma,s,c_0) = \frac{c_0 s \, e^{s \gamma}}{2} \cdot \frac{ 1 - \sinh(s \gamma)}{ \sinh \pa{\frac{s \gamma}{2}}^2 }.
   \end{equation*}
\end{claim}
\begin{proof}
We start from the expression of \Cref{claim:obscure_computations}, and derive using the fact that
\begin{equation*}
    \frac{d}{dx} \coth(x) = \frac{-1}{\sinh(x)^2}.
\end{equation*}
More precisely, by \Cref{claim:obscure_computations},
\begin{align*}
    \frac{\partial}{\partial \gamma} \pi_\star(\gamma,s,c_0) &= \frac{\partial}{\partial \gamma} \pa{1 - c_0 \, e^{s \gamma} \coth\pa{\frac{s \gamma}{2}}} \\
    &= -c_0 s e^{s \gamma} \coth\pa{\frac{s \gamma}{2}} + \frac{c_0 s e^{s \gamma}}{2 \sinh\pa{\frac{s \gamma}{2}}^2} \\
    &= \frac{c_0 s \, e^{s \gamma}}{2 \sinh\pa{\frac{s \gamma}{2}}^2} \cdot \pa{1 - 2 \coth \pa{\frac{s \gamma}{2}} \sinh \pa{\frac{s \gamma}{2}}^2}.
\end{align*}
Then, we observe that for every~$x \in \bbR$,
\begin{equation*}
    2 \coth(x) \sinh(x)^2 = 2 \pa{\frac{e^x+e^{-x}}{e^x-e^{-x}}} \pa{\frac{e^x-e^{-x}}{2}}^2 = \frac{(e^x+e^{-x})(e^x-e^{-x})}{2} = \frac{e^{2x}-e^{-2x}}{2} = \sinh(2x).
\end{equation*}
Plugging this in the last equation concludes the proof of \Cref{claim:equilibrium_payoff_derivative}.
\end{proof}
By definition, 
\begin{equation*}
    \gamma > \gamma_{\min} > \frac{\ln(1+\sqrt{2})}{s} = \frac{\sinh^{-1}(1)}{s},
\end{equation*}
so $\sinh(s \gamma) > 1$. By \Cref{claim:equilibrium_payoff_derivative}, this implies that $\frac{\partial}{\partial \gamma} \pi_\star(\gamma,s,c_0) < 0$ on the interval $[\gamma_{\min},\gamma_{\max}]$, which concludes the proof of \Cref{thm:company_braess}. 
\end{proof}




\begin{figure} [htbp]
    \centering   
    \begin{subfigure}{.45\textwidth}
        \centering
        \begin{subfigure}{\textwidth}
            \centering
            \includegraphics[width=\linewidth]{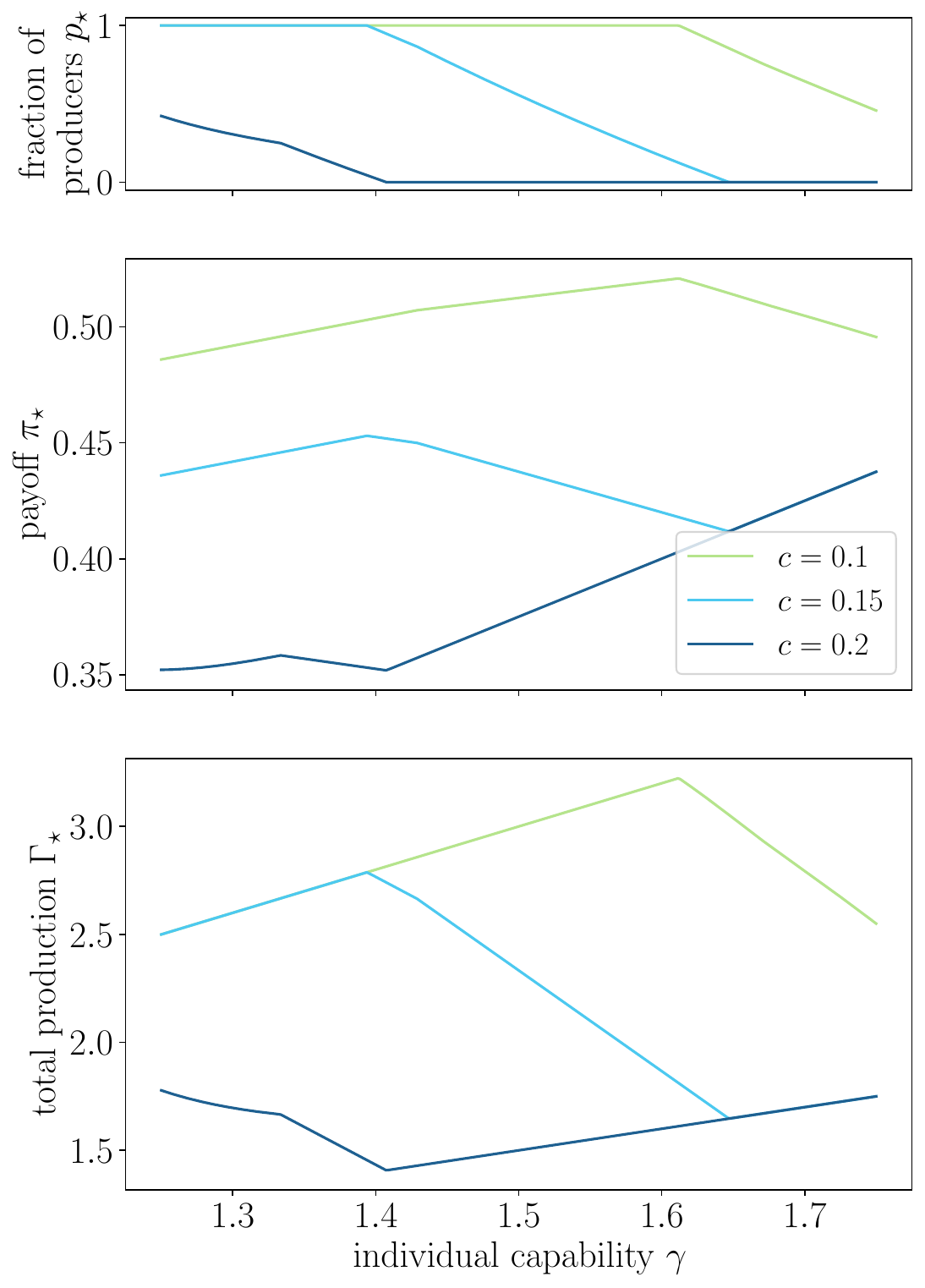}
            \caption{Payoffs and Total Production. $c=0.15$}
        \end{subfigure}
    \end{subfigure}
    \hfill
    \begin{subfigure}{.45\textwidth}
        \centering
        \begin{subfigure}{\textwidth}
            \centering
            \includegraphics[width=\linewidth]{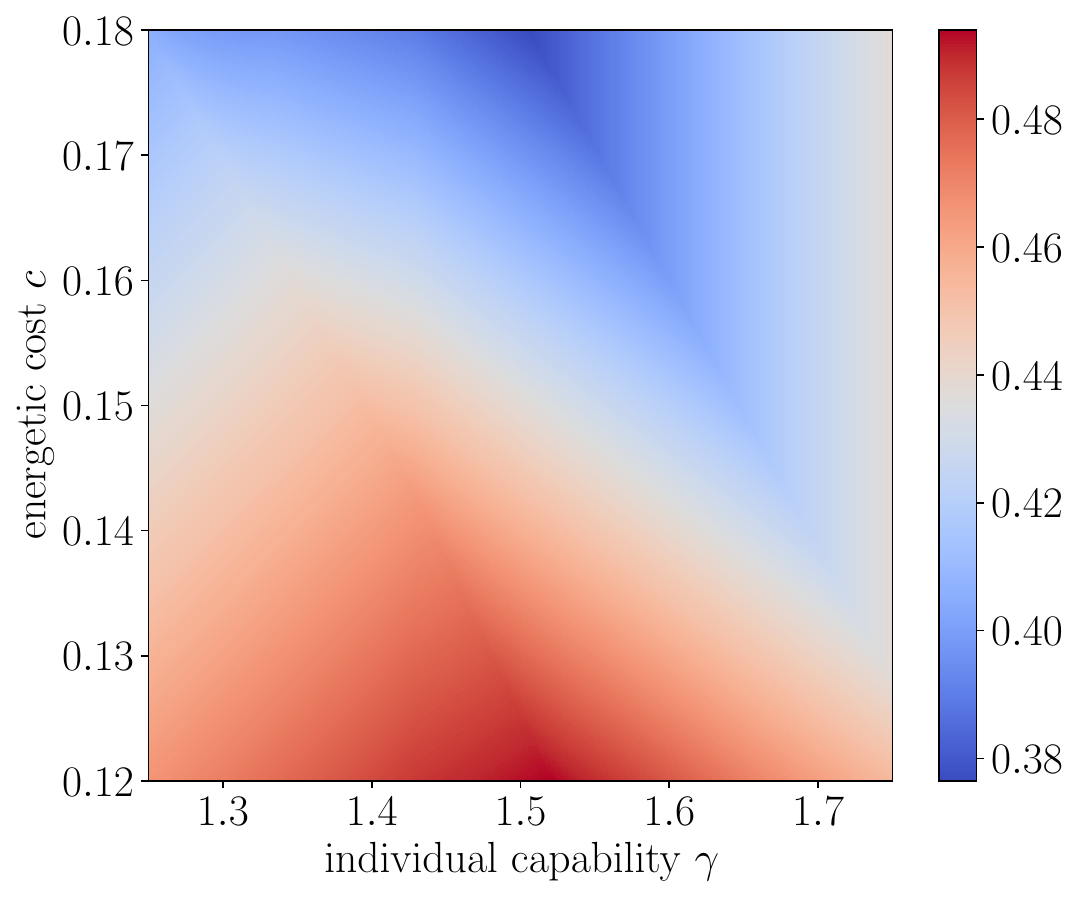}
            \caption{Payoffs.}
        \end{subfigure}
        \vfill
        \begin{subfigure}{\textwidth}
            \centering
            \includegraphics[width=\linewidth]{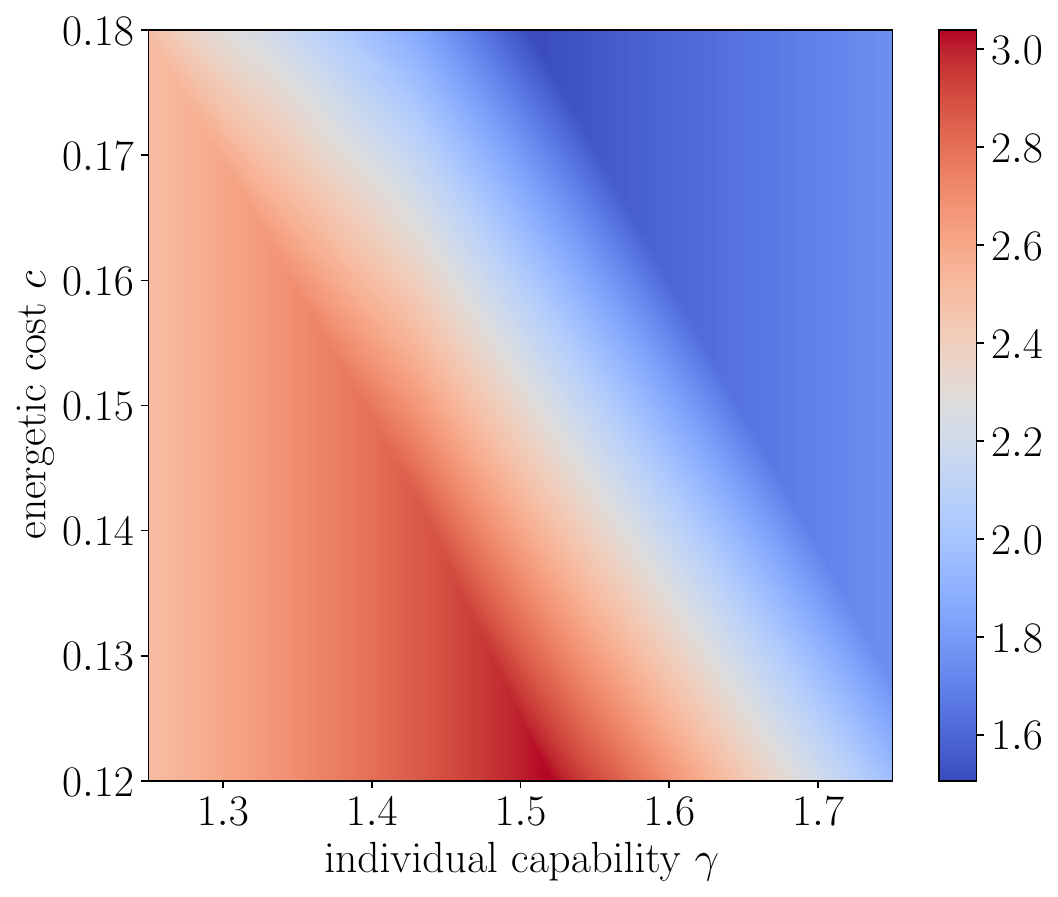}
            \caption{Total Production.}
        \end{subfigure}
    \end{subfigure}
    
    \caption{{\bf The Company game with the utility function $\phi: x \mapsto \min(1,x)$.} 
    Considering a scenario with $n=4$ players,  assuming $a=p=\frac{1}{2}$, and $s=0.7$.
    {\bf (a)} The graph presents the payoff $\pi_\star$ and the total production $\Gamma_\star(\gamma)$ as well as the probability of being a producer $p_\star(\gamma)$ at equilibria, as a function of individual capabilities $\gamma$ and for several values of $c$. The declines in the corresponding plots (of the payoff and the total production) display the Reverse-Correlation phenomenon.
    {\bf (b)} and {\bf (c)} The relationship between {\bf (b)} the payoff $\pi_\star$,  and {\bf (c)} the total production $\Gamma_\star$,  as a function (color scale) of individual capabilities $\gamma$ and the cost $c$ for production.}
    \label{sec:alternative_company_RC} 
\end{figure}

\clearpage

\section{A Necessary Condition for the Reverse-Correlation phenomenon}\label{sec:necessary}

We assume that the payoffs are positively correlated with the number of producers in the group, that is,
\begin{equation} \label{eq:producing_is_good}
    \text{for every~$q \in [0,1]$ and every~$\gamma \geq 0$, $p \mapsto \payoff{q}{p}(\gamma)$ is non-decreasing in $p$.}
\end{equation}
In addition, we assume that the payoffs are positively correlated with the parameter~$\gamma$, that is,
\begin{equation} \label{eq:gamma_is_good}
    \text{for every~$q \in [0,1]$ and every~$p \in [0,1]$, $\gamma \mapsto \payoff{q}{p}(\gamma)$ is non-decreasing in $\gamma$.}
\end{equation}
Under these assumptions, we identify the following necessary condition for the emergence of a Reverse-Correlation phenomenon.

\begin{theorem}\label{thm:necessary}
    For any PS model in which the payoff of producers does not depend on the strategies of other players, there is no Reverse-Correlation phenomenon.
    More precisely, if there are two values $\gamma_1,\gamma_2$ such that   $\gamma_1 < \gamma_2$ and two ESS 
 denoted $p_\star(\gamma_1)$ and $p_\star(\gamma_2)$, then the corresponding payoffs satisfy $\pi_\star(\gamma_1) \leq \pi_\star(\gamma_2)$.
\end{theorem}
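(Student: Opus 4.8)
The plan is to exploit the hypothesis that the producer's payoff is insensitive to opponents: set $\rho(\gamma) := \payoff{1}{p}(\gamma)$, which by assumption is well defined independently of $p$, and which is non-decreasing in $\gamma$ by \Cref{eq:gamma_is_good} (applied with $q=1$). The central observation to establish is a dichotomy on the ESS value: whenever $p_\star(\gamma) > 0$ one has $\pi_\star(\gamma) = \rho(\gamma)$, whereas when $p_\star(\gamma) = 0$ one has $\pi_\star(\gamma) = \payoff{0}{0}(\gamma) \geq \rho(\gamma)$. In all cases this yields the uniform lower bound $\pi_\star(\gamma) \geq \rho(\gamma)$; in fact, for the ``$\geq$'' direction one does not even need the dichotomy, since any ESS is a symmetric Nash equilibrium, so $\pi_\star(\gamma) = \payoff{p_\star(\gamma)}{p_\star(\gamma)}(\gamma) \geq \payoff{1}{p_\star(\gamma)}(\gamma) = \rho(\gamma)$.

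For the first half of the dichotomy, I would use that $q \mapsto \payoff{q}{p_\star}(\gamma)$ is affine in $q$. Note first that both clauses (i) and (ii) in the definition of ESS imply the weak inequality $\payoff{p_\star}{p_\star}(\gamma) \geq \payoff{q}{p_\star}(\gamma)$ for every $q \in [0,1]$. If $p_\star \in (0,1)$, an affine function on $[0,1]$ that attains at the interior point $p_\star$ a value at least as large as its values at the endpoints must be constant; hence $\payoff{1}{p_\star}(\gamma) = \payoff{p_\star}{p_\star}(\gamma)$, i.e.\ $\pi_\star(\gamma) = \rho(\gamma)$. If $p_\star = 1$, then directly $\pi_\star(\gamma) = \payoff{1}{1}(\gamma) = \rho(\gamma)$ by the producer-insensitivity hypothesis. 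For the second half, $p_\star = 0$ gives $\pi_\star(\gamma) = \payoff{0}{0}(\gamma)$, and the Nash property tested against the pure producer strategy gives $\payoff{0}{0}(\gamma) \geq \payoff{1}{0}(\gamma) = \rho(\gamma)$.

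With these facts, the theorem follows by splitting on $p_\star(\gamma_1)$. If $p_\star(\gamma_1) > 0$, then $\pi_\star(\gamma_1) = \rho(\gamma_1) \leq \rho(\gamma_2) \leq \pi_\star(\gamma_2)$, using monotonicity of $\rho$ and the uniform bound at $\gamma_2$. If $p_\star(\gamma_1) = 0$, then $\pi_\star(\gamma_1) = \payoff{0}{0}(\gamma_1)$, and I would chain
\[
\payoff{0}{0}(\gamma_1) \;\leq\; \payoff{0}{0}(\gamma_2) \;\leq\; \payoff{0}{p_\star(\gamma_2)}(\gamma_2) \;\leq\; \payoff{p_\star(\gamma_2)}{p_\star(\gamma_2)}(\gamma_2) \;=\; \pi_\star(\gamma_2),
\]
where the first inequality is \Cref{eq:gamma_is_good} with $q=p=0$, the second is \Cref{eq:producing_is_good} (monotonicity in the opponents' producer probability, at $q=0$), and the third is the Nash property of the ESS $p_\star(\gamma_2)$ against the pure scrounger.

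I expect the only delicate point to be the affine-function argument giving $\pi_\star(\gamma) = \rho(\gamma)$ for an interior ESS, together with the careful verification that clauses (i)–(ii) both yield the weak Nash inequality used throughout; the remainder is a direct assembly of the two monotonicity hypotheses \eqref{eq:producing_is_good}–\eqref{eq:gamma_is_good}. A minor bookkeeping subtlety is to ensure that the boundary cases $p_\star \in \{0,1\}$ are each handled exactly once and consistently between the two halves of the argument.
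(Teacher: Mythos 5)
Your proof is correct and takes essentially the same route as the paper's: the same dichotomy on whether the ESS assigns positive probability to producing (giving $\pi_\star(\gamma)=\payoff{1}{p_\star(\gamma)}(\gamma)$ via the indifference/affinity argument, versus $\pi_\star(\gamma)=\payoff{0}{0}(\gamma)\geq \payoff{1}{0}(\gamma)$ at a pure-scrounger ESS), followed by the same chaining of \eqref{eq:producing_is_good} and \eqref{eq:gamma_is_good}. The only cosmetic difference is that your uniform lower bound $\pi_\star(\gamma_2)\geq\payoff{1}{p_\star(\gamma_2)}(\gamma_2)$ lets you merge two of the paper's four terminal cases into one.
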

\begin{proof}
    Fix a PS model.
    By assumption in \Cref{thm:necessary},
    \begin{equation} \label{eq:producer_insensitive}
        \text{For every~$\gamma \geq 0$, $p \mapsto \payoff{1}{p}(\gamma)$ does not depend on~$p$.}
    \end{equation}
    In what follows, we will simply write $\payoff{1}{p}(\gamma) = \pi_\producer(\gamma)$.
    By definition of ESS, and by \Cref{eq:producer_insensitive}, we have for every~$i \in \{1,2\}$:
    \addtocounter{equation}{1}
    \begin{align}
        p_\star(\gamma_i) = 0 &\implies \pi_\star(\gamma_i) = \payoff{0}{0}(\gamma_i) \geq \pi_\producer(\gamma_i), \label{eq:nash_definition1} \tag{\theequation.a} \\
        p_\star(\gamma_i) = 1 &\implies \pi_\star(\gamma_i) = \pi_\producer(\gamma_i) \geq \payoff{0}{1}(\gamma_i), \label{eq:nash_definition2} \tag{\theequation.b} \\
        p_\star(\gamma_i) \notin \{0,1\} &\implies \pi_\star(\gamma_i) = \pi_\producer(\gamma_i) =  \payoff{0}{p_\star(\gamma_i)}(\gamma_i), \label{eq:nash_definition3} \tag{\theequation.c}
    \end{align}
    where \Cref{eq:nash_definition1} holds because $\payoff{0}{0}(\gamma_i) \geq \payoff{1}{0}(\gamma_i)=\pi_\producer(\gamma_i)$.
    As a consequence of \Cref{eq:nash_definition2,eq:nash_definition3}, we have
    \begin{equation} \label{eq:nash_definition4}
        p_\star(\gamma_i) \neq 0 \implies \pi_\star(\gamma_i) = \pi_\producer(\gamma_i) \geq \payoff{0}{p_\star(\gamma_i)}(\gamma_i).
    \end{equation}
    Now, let us
    show that $\pi_\star(\gamma_1) \leq \pi_\star(\gamma_2)$.
    \begin{itemize}
        \item If $p_\star(\gamma_1) = p_\star(\gamma_2) = 0$, then
        \begin{equation*}
            \pi_\star(\gamma_1) \underset{\eqref{eq:nash_definition1}}{=} \payoff{0}{0}(\gamma_1) \underset{\eqref{eq:gamma_is_good}}{\leq} \payoff{0}{0}(\gamma_2) \underset{\eqref{eq:nash_definition1}}{=} \pi_\star(\gamma_2).
        \end{equation*}
        
        \item If $p_\star(\gamma_1) \neq 0$ and $p_\star(\gamma_2) \neq 0$, then
        \begin{equation*}
            \pi_\star(\gamma_1) \underset{\eqref{eq:nash_definition4}}{=} \pi_\producer(\gamma_1) \underset{\eqref{eq:gamma_is_good}}{\leq} \pi_\producer(\gamma_2) \underset{\eqref{eq:nash_definition4}}{=} \pi_\star(\gamma_2).
        \end{equation*}

        \item If $p_\star(\gamma_1) \neq 0$ and $p_\star(\gamma_2) = 0$, then
        \begin{equation*}
            \pi_\star(\gamma_1) \underset{\eqref{eq:nash_definition4}}{=} \pi_\producer(\gamma_1) \underset{\eqref{eq:gamma_is_good}}{\leq} \pi_\producer(\gamma_2) \underset{\eqref{eq:nash_definition1}}{\leq} \pi_\star(\gamma_2).
        \end{equation*}
        \item If $p_\star(\gamma_1) = 0$ and $p_\star(\gamma_2) \neq 0$, then
        \begin{equation*}
            \pi_\star(\gamma_1) \underset{\eqref{eq:nash_definition1}}{=} \payoff{0}{0}(\gamma_1) \underset{\eqref{eq:producing_is_good}}{\leq} \payoff{0}{p_\star(\gamma_2)}(\gamma_1) \underset{\eqref{eq:gamma_is_good}}{\leq} \payoff{0}{p_\star(\gamma_2)}(\gamma_2) \underset{\eqref{eq:nash_definition4}}{\leq} \pi_\star(\gamma_2).
        \end{equation*}
    \end{itemize}
    This concludes the proof of \Cref{thm:necessary}.
\end{proof}

\end{document}